\newtheorem{lemma}{Lemma}
\newtheorem{assumption}{Assumption}
\definecolor{myred}{HTML}{e37777}
\definecolor{myblue}{HTML}{7faee3}
\title{Breaking the Aggregation Bottleneck in Federated Recommendation:\\
A Personalized Model Merging Approach}
\author{
    Jundong Chen\textsuperscript{\rm 1,2},
    Honglei Zhang\textsuperscript{\rm 1,2},
    Chunxu Zhang\textsuperscript{\rm 3},
    Fangyuan Luo\textsuperscript{\rm 4},
    Yidong Li\textsuperscript{\rm 1,2}\thanks{Corresponding author.}
}
\quad\textsuperscript{\rm 3}Jilin University\quad\textsuperscript{\rm 4}Beijing University of Technology
\begin{document}

\maketitle

\begin{abstract}
Federated recommendation (FR) facilitates collaborative training by aggregating local models from massive devices, enabling client-specific personalization while ensuring privacy. However, we empirically and theoretically demonstrate that server-side aggregation can undermine client-side personalization, leading to suboptimal performance, which we term the aggregation bottleneck. This issue stems from the inherent heterogeneity across numerous clients in FR, which drives the globally aggregated model to deviate from local optima. To this end, we propose FedEM, which elastically merges the global and local models to compensate for impaired personalization. Unlike existing personalized federated recommendation (pFR) methods, FedEM (1) investigates the aggregation bottleneck in FR through theoretical insights, rather than relying on heuristic analysis; (2) leverages off-the-shelf local models rather than designing additional mechanisms to boost personalization. Extensive experiments on real-world datasets demonstrate that our method preserves client personalization during collaborative training, outperforming state-of-the-art baselines. Our code is available at \url{https://github.com/jundongchen13/FedEM}.
\end{abstract}



\section{Introduction}
Federated recommendation (FR), as an emerging on-device learning paradigm, ensures that clients' raw data remains local during the training process, thus protecting user privacy \cite{sun2022survey,yin2024device}. Pioneering works include FedMF~\cite{chai2020secure} and FedNCF~\cite{perifanis2022federated}, which apply matrix factorization and neural collaborative filtering, respectively, within the federated framework. Conceptually, the clients upload local models after local training to the server for global aggregation. Later, they download the global model as the new local model for the next training round~\cite{chen2023win}.

\begin{figure}[t]
\centering
\includegraphics[width=0.82\columnwidth]{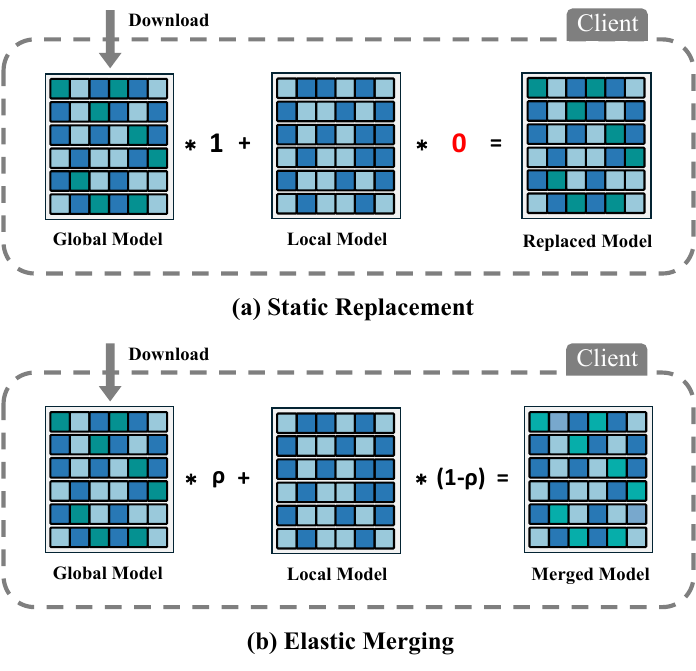} 
\caption{Traditional FR methods directly replace the local model with the global model downloaded from the server, while our FedEM can elastically merge both the global model and local model with the weight $\rho$, thereby delicately balancing collaboration and personalization.}
\label{pic:sr_and_em}
\end{figure}

\begin{figure*}[ht]
\centering
\includegraphics[width=0.99\textwidth]{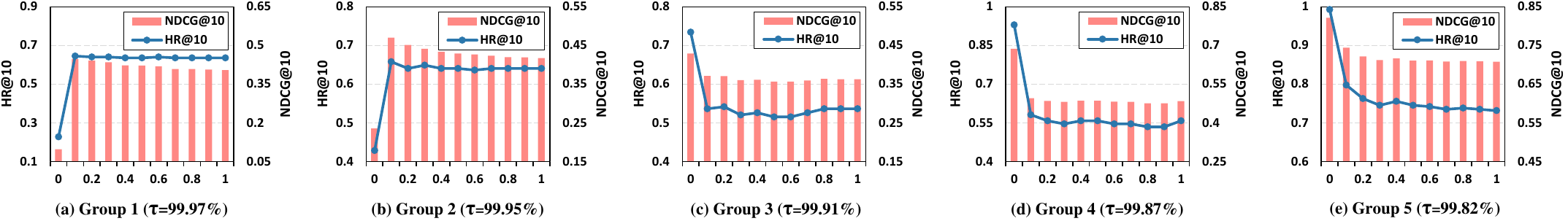} 
\caption{The performance of different client groups under varying merging weights $\rho$. The degree of data sparsity $\tau$ decreases from group 1 to group 5. Here, $\rho=1$ stands for the traditional static replacement scheme.}
\label{pic:pre_exp}
\vspace{-0.05cm}
\end{figure*}

However, due to the varying user preferences in FR, the interaction data from different clients are not independent and identically distributed (Non-IID), which naturally leads to data heterogeneity~\cite{sun2022survey}. Traditional FR methods fail to address this issue by sharing one same global model across all clients~\cite{lightfr_2022}. Hence, personalized federated recommendation (pFR) has been introduced to tailor client-specific models. For example, PFedRec~\cite{zhang2023dual} enables dual personalization of both local and global components, while FedRAP~\cite{li2023federated} trains an extra personalized model locally. Although effective in practice, existing pFR methods suffer from two key limitations: (1) They overlook the degradation of local personalization caused by global aggregation in FR; (2) They rely on heuristically designed personalization mechanisms, which limit their compatibility.

For the first limitation, we theoretically reveal that aggregation can cause a loss of local information in the global model. This issue is exacerbated by the unique nature of FR tasks, which often involve millions of clients, several orders of magnitude more than typical federated learning scenarios. Besides, existing FR methods typically adopt a \texttt{static replacement} scheme, as illustrated in Figure~\ref{pic:sr_and_em}(a), where the global model replaces the local one for both training and inference. Such a scheme propagates the impact of aggregation, causing optimization to deviate from the client-specific optimum and ultimately undermining local personalization. We refer to this issue as the aggregation bottleneck.



For the second limitation, we explore a simpler and more general solution. Grounded in model merging theory~\cite{zhouemergence, yang2024model}, we aim to compensate for local information loss with the off-the-shelf local model, thereby breaking the aggregation bottleneck. Specifically, we adopt an \texttt{elastic merging} scheme, as illustrated in Figure~\ref{pic:sr_and_em}(b), where $\rho$ denotes the weight of the global model parameters, while $1-\rho$ stands for the weight of the local ones. We conduct empirical validation based on FedMF. To account for client heterogeneity, we divide the clients into five groups according to privacy-insensitive statistics of their local data, \textit{e.g.}, data sparsity. For each group, we report the average performance metrics (HR@10 and NDCG@10) under different $\rho$. As shown in Figure~\ref{pic:pre_exp}, we observe that: (1) Static replacement scheme, \textit{i.e.}, $\rho=1$, yields suboptimal performance, which is consistent with our theoretical analysis; (2) Elastic merging scheme effectively enhances model performance. Owing to heterogeneity across clients, each group achieves optimal trade-offs under different merging weights $\rho$. Given this, we adopt client-specific merging that elastically adjusts to local personalization needs. 

Taking both limitations into account, we propose a simple yet theoretically guaranteed pFR method called \textbf{Fed}erated recommendation via \textbf{E}lastic \textbf{M}erging (\textbf{FedEM}). It merges the aggregated global model with the off-the-shelf local model in a balanced way, effectively absorbing collaborative information while preserving client-specific personalization. To sum up, our main contributions are as follows:

\begin{itemize}[left=0pt]
    \item To the best of our knowledge, we are the first to theoretically analyze the aggregation bottleneck in FR scenarios, \textit{i.e.}, the loss of local information caused by global aggregation, which further harms local personalization and leads to suboptimal performance.
    \item Based on model merging, we propose FedEM to bridge the gap between global collaboration and local personalization. Unlike existing heuristic methods, our approach is theoretically grounded and empirically validated.
    \item The proposed elastic merging module is model-agnostic and can be seamlessly integrated as a plug-in to enhance FR/pFR methods constrained by aggregation bottleneck.
    \item Extensive experiments demonstrate that FedEM consistently outperforms state-of-the-art (SOTA) methods.
\end{itemize}

\section{Related Work}
\subsection{Personalized Federated Learning} 
Federated learning (FL) is a distributed learning paradigm to mitigate privacy issues~\cite{huang2024federated, guo2025federated}. Traditional FL methods, such as FedAvg~\cite{mcmahan2017communication}, struggle to derive a global model generalized for each client when the local data is Non-IID~\cite{huang2021personalized}. To that end, some personalized federated learning (pFL) methods aim to fine-tune the globally aggregated model for each client to obtain the personalized ones~\cite{collins2021exploiting,fallah2020personalized}. For instance, FedALA exploits the general information in the lower layers of the global model to enhance the capability of feature extraction of the local model targeting federated vision tasks~\cite{zhang2023fedala}; Still other methods tend to locally learn a personalized model for each client~\cite{li2021ditto},~\textit{e.g.}, pFedMe~\cite{t2020personalized} uses Moreau envelopes as the client regulation loss to decouple personalized model optimization from the global model learning; Other methods take a global view of personalized aggregation for each client~\cite{ji2019learning, zhang2020personalized}, \textit{e.g.}, pFedGraph~\cite{ye2023personalized} optimizes the collaboration graph to perform weighted aggregation for different clients.

\subsection{Personalized Federated Recommendation} 
In federated recommendation (FR), there is natural heterogeneity among clients due to their different preferences, thus necessitating personalization of the local model~\cite{sun2022survey,yin2024device}. Similar to pFL, personalized federated recommendation (pFR) methods can also be categorized into three research lines as follows. (1) Fine-tune the global model~\cite{zhang2024transfr}: PFedRec~\cite{zhang2023dual} personalizes both the score function and global model to alleviate data heterogeneity. FedCIA~\cite{han2025fedcia} leverages the aggregated item similarity matrix to guide local model training, aiming to achieve global fine-tuning. (2) Learn an additional personalized model~\cite{chen2025beyond}: FedRAP~\cite{li2023federated} applies an additive model to item embeddings to enhance personalization. (3) Perform personalized global aggregation~\cite{luo2022personalized, zhang2024beyond}: FedFast~\cite{muhammad2020fedfast} performs global aggregation by identifying representatives from different clusters based on user profile similarities. GPFedRec~\cite{zhang2024gpfedrec} performs graph-guided aggregation to recover inter-client correlations, generating client-specific global models. Unlike the above methods, which follow the traditional static replacement scheme and rely on heuristic personalization mechanisms, FedEM adopts an elastic merging scheme, offering a simple yet theoretically grounded solution. During global collaborative training, it leverages the off-the-shelf local model to enhance personalization for each client.

\section{Preliminaries}

Let $\mathcal{U}$ denote the set with $n$ users/clients, $\mathcal{I}$ the set with $m$ items. Then $\mathcal{D}_{u} = \{(u,i,r_{ui} | i\in\mathcal{I}_u)\}$ denotes the local interaction dataset of client $u$, where $\mathcal{I}_u$ is for the items observed by client $u$, and each entry $r_{ui}\in \{0,1\}$ is for the label on item $i$ by client $u$. The goal of FR is to predict $\hat{r}_{ui}$ of client $u$ for each unobserved item $i \in \mathcal{I} \setminus \mathcal{I}_u$ on local devices. Formally, the global optimization objective over $n$ clients of FR tasks is:
\begin{equation}\label{eq:frs_fl}
\min _{\left(\mathbf{p}_1, \mathbf{p}_2,\cdots, \mathbf{p}_n; \mathbf{Q}_1^l, \mathbf{Q}_2^l,\cdots, \mathbf{Q}_n^l\right)} \sum_{u=1}^n p_u \mathcal{L}_u\left(\mathbf{p}_u, \mathbf{Q}_u^l ; \mathcal{D}_u\right),
\end{equation}
\noindent where $\mathbf{p}_u\in\mathbb{R}^d$ and $\mathbf{Q}_u^l\in\mathbb{R}^{m \times d}$ denote local user embedding and local item embedding table for client $u$, respectively, $d$ is the dimension of the embedding vector. In this work, we treat item embedding table $\mathbf{Q}$ as the intermediate model parameters for uploads and downloads, since it is the standard configuration in embedding-based FRs \cite{ammad2019federated}. The server aggregates local parameters $\mathbf{Q}_u^l$ with weight $p_u$ to derive global model $\mathbf{Q}^g$, \textit{e.g.}, $p_u=|\mathcal{D}_u|/ \sum_{v=1}^n|\mathcal{D}_v|$ in FedAvg \cite{mcmahan2017communication} and $p_u=1/n$ in FCF~\cite{ammad2019federated}. $\mathcal{L}_u$ is the task-specific objective to facilitate local training.

In this work, we focus on the typical recommendation task with implicit feedback, where the rating $r_{ui}$ of item $i$ by user $u$ is either 1 or 0, indicating interested versus uninterested interaction, respectively. Therefore, we adopt the binary cross-entropy (BCE) loss \cite{he2017neural}, which is commonly used in binary-value problems, as the local objective $\mathcal{L}_u$. Formally, the BCE loss is defined as:
\begin{equation}\label{eq:bce}
\begin{split}
    \mathcal{L}_{u}(\mathbf{p}_u, \mathbf{Q}_u^l ; & \mathcal{D}_u) =-\sum_{(u, i, r_{u i}) \in \mathcal{D}_u} \big [r_{u i} \log \hat{r}_{u i}\\& +\left(1-r_{u i}\right) \log \left(1-\hat{r}_{u i}\right)\big ],
\end{split}
\end{equation}
\noindent where $\hat{r}_{u i}=\phi(\mathbf{p}_u\mathbf{q}_i^{\top})$ denotes the predicted rating of item $i$ by user $u$, where $\mathbf{q}_i\in\mathbf{Q}_u^l$ and $\phi(\cdot)$ is the Sigmoid function.

\begin{figure*}[htbp]
\centering
\includegraphics[width=0.99\textwidth]{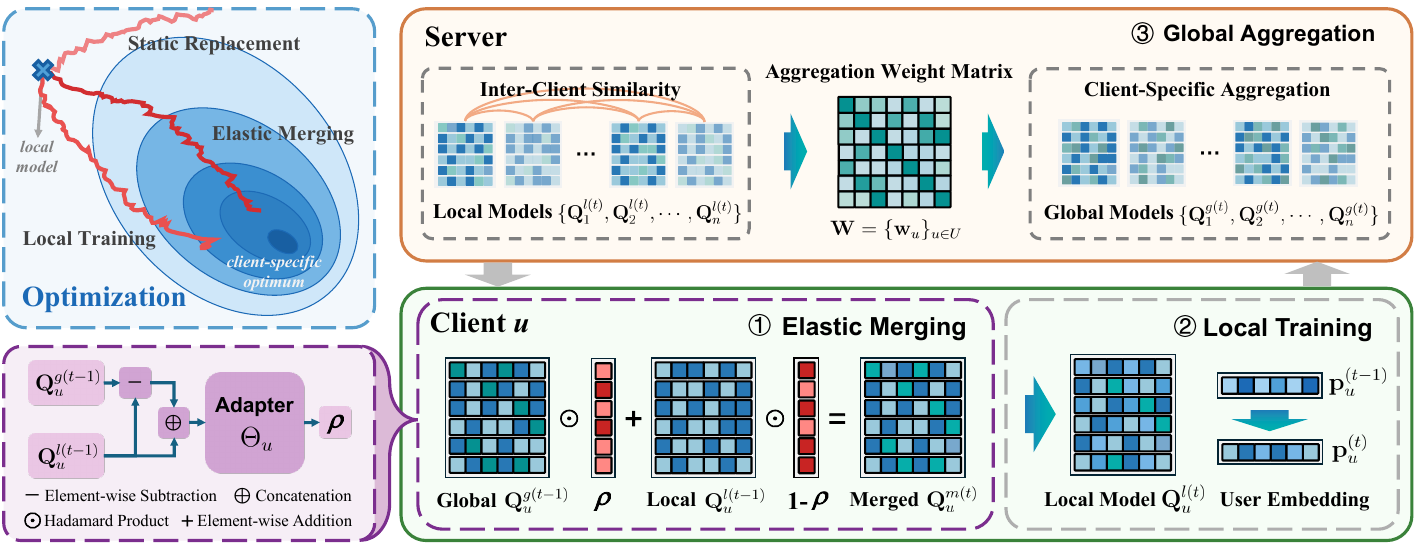} 
\caption{The framework of FedEM. Unlike the static replacement scheme in existing FR methods, we propose an elastic merging scheme to mitigate the optimization deviation caused by global aggregation, effectively breaking the performance bottleneck. Additionally, the server can perform similarity-based aggregation to further alleviate the aggregation bottleneck.}
\label{pic:framework}
\end{figure*}

\section{Methodology}
\subsection{Motivation}
\label{sec:analysis}
\textbf{Aggregation Bottleneck.} Most existing FR methods adopt fixed-weight aggregation on the server side to enable collaborative training across clients. Taking FCF~\cite{ammad2019federated} for example, that is, $p_u=1/n$, the global model derived by aggregation in round $t$ can be represented as:

\begin{equation}\label{eq:agg}
    \begin{split}
        \mathbf{Q}^{g(t)} & = \frac{1}{n}\sum_{u=1}^{n}\mathbf{Q}_u^{l(t)}=\frac{1}{n} \mathbf{Q}_u^{l(t)} + \frac{n - 1}{n} \mathbf{Q}_r^{g(t)}, \\ \quad & \text{where} \quad \mathbf{Q}_r^{g(t)} = \frac{1}{n - 1} \sum_{v \ne u}^n \mathbf{Q}_v^{l(t)}.
    \end{split}
\end{equation}
Here, $\mathbf{Q}_r^{g(t)}$ denotes the aggregated model over the remaining clients except for ego Client $u$. However, unlike typical FL settings, FR tasks often involve tens of thousands or even millions of clients. Therefore, we have:
\begin{equation}
\label{eq:lim}
\lim_{n\to\infty}\frac{1}{n}=0,\lim_{n\to\infty}\frac{n-1}{n}=1,
\end{equation}
which means $\mathbf{Q}^{g(t)} \approx \mathbf{Q}_r^{g(t)}$, indicating a loss of local information in the global model. At this point, for client $u$, adopting the static replacement scheme may lead to the aggregation bottleneck formally established in Lemma~\ref{lemma:misalignment}.
\begin{assumption}[Smoothness and Strong Convexity]
\label{assumption:smooth_strong}
The local objective function $\mathcal{L}_u(\mathbf{Q})$ is assumed to be continuously differentiable, $L$-smooth, and $\mu$-strongly convex with respect to the model parameter $\mathbf{Q}$. This assumption can be approximately satisfied in our setting by incorporating $\ell_2$ regularization into the BCE loss~\cite{boyd2004convex, bottou2018optimization}. Such assumptions are also standard in theoretical analyses of federated learning~\cite{mcmahan2017communication, t2020personalized}. 

\end{assumption}
\begin{lemma}[Performance Bottleneck of Global Aggregation]\label{lemma:misalignment}
    Based on Assumption~\ref{assumption:smooth_strong}, let $\mathbf{Q}_u^*$ denote the local optimal model. Then, following the static replacement scheme, i.e., using the global model $\mathbf{Q}^{g(t)}$ for local training, may lead to:
    \begin{equation}
        \langle \nabla \mathcal{L}_u(\mathbf{Q}^{g(t)}), \mathbf{Q}^{g(t)} - \mathbf{Q}_u^* \rangle \leq 0,
    \end{equation}
    which shows that the globally aggregated model can cause training optimization to deviate from the client-specific optimum, thereby undermining local personalization. This issue becomes more pronounced under the high inherent client heterogeneity in FR, leading to a performance bottleneck.
\end{lemma}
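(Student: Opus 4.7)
The plan is to reduce the claimed inner-product inequality to a structural statement about how the aggregated model fails to align with the ego client's optimum under heterogeneity, and then convert that structural fact into the directional claim through a Taylor expansion controlled by Assumption~1.

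First, I would invoke the decomposition in Eq.~(3) to write $\mathbf{Q}^{g(t)} - \mathbf{Q}_u^* = \tfrac{1}{n}(\mathbf{Q}_u^{l(t)} - \mathbf{Q}_u^*) + \tfrac{n-1}{n}(\mathbf{Q}_r^{g(t)} - \mathbf{Q}_u^*)$. Combined with the large-$n$ limit of Eq.~(4), the ego-client correction shrinks as $1/n$ while the persistent heterogeneity bias $\mathbf{b} := \mathbf{Q}_r^{g(t)} - \mathbf{Q}_u^*$ — the offset between Client $u$'s optimum and the aggregate of all other clients' local models — survives. Under Non-IID preferences this bias does not average out, which is the key structural fact driving the lemma: as $n$ grows, $\mathbf{Q}^{g(t)}$ is essentially determined by the other clients and therefore cannot be expected to lie in the neighborhood where Client $u$'s gradient encodes useful information about $\mathbf{Q}_u^*$.

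Second, I would use Assumption~1 to Taylor-expand $\nabla \mathcal{L}_u$ about $\mathbf{Q}_u^*$. Since $\nabla \mathcal{L}_u(\mathbf{Q}_u^*) = 0$ by optimality and $\nabla \mathcal{L}_u$ is $L$-Lipschitz, the mean-value form gives $\nabla \mathcal{L}_u(\mathbf{Q}^{g(t)}) = H_u(\xi)(\mathbf{Q}^{g(t)} - \mathbf{Q}_u^*)$ for some $\xi$ on the segment, with $\mu\mathbf{I} \preceq H_u(\xi) \preceq L\mathbf{I}$. Substituting into the target quantity reduces it to the quadratic form $(\mathbf{Q}^{g(t)} - \mathbf{Q}_u^*)^\top H_u(\xi)(\mathbf{Q}^{g(t)} - \mathbf{Q}_u^*)$, which I would then expand through the decomposition from Step~1 to expose a $\mu/n$-scale attractive contribution from the ego correction and a persistent cross-term driven by $\mathbf{b}$ and the curvature eigenstructure.

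The hard part will be reconciling the strong-convexity floor with the claimed sign: strong convexity alone forces the quadratic form to be non-negative, so the deviation statement must be extracted by invoking a quantitative heterogeneity condition — for example, a lower bound on $\|\mathbf{Q}_u^* - \mathbf{Q}_v^*\|$ or on the inter-client gradient dissimilarity — that makes the aggregation-induced bias carried by $\mathbf{b}$ dominate the shrinking attractive contribution of the ego term, so that in the effective descent direction toward $\mathbf{Q}_u^*$ the inner product collapses (or, along a bias-dominated subspace, flips sign). Tightening this coupling between strong convexity and heterogeneity — making precise the regime and the projection in which the inequality bites, while remaining consistent with Assumption~1 — is where I expect the delicate bookkeeping of the proof to concentrate.
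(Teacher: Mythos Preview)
Your Hessian/mean-value route is sharper than the paper's and, precisely because of that, it exposes rather than proves the claim: with $\mu\mathbf{I}\preceq H_u(\xi)\preceq L\mathbf{I}$ you get $(\mathbf{Q}^{g(t)} - \mathbf{Q}_u^*)^\top H_u(\xi)(\mathbf{Q}^{g(t)} - \mathbf{Q}_u^*) \ge \mu\|\mathbf{Q}^{g(t)} - \mathbf{Q}_u^*\|^2 > 0$ whenever $\mathbf{Q}^{g(t)}\ne\mathbf{Q}_u^*$, so no heterogeneity bound on $\mathbf{b}$, no large-$n$ limit, and no ``bias-dominated subspace'' can flip the sign of the \emph{full} scalar inner product. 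Your proposed escape via projection would establish a different, projected statement, not the inequality in the lemma; the ``delicate bookkeeping'' you anticipate has no consistent resolution under Assumption~1 as stated.

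The paper's argument takes a different path and does not repair this either. It avoids the Hessian expansion and instead (i) uses Eqs.~(3)--(4) plus $L$-smoothness to argue $\nabla\mathcal{L}_u(\mathbf{Q}^{g(t)})\approx\nabla\mathcal{L}_u(\mathbf{Q}_r^{g(t)})$ for large $n$; (ii) uses $\mu$-strong convexity to show the inner product is non-negative at the \emph{local} iterate $\mathbf{Q}_u^{l(t)}$; and (iii) imports a heterogeneity lower bound $\|\mathbf{Q}_r^{g(t)}-\mathbf{Q}_u^*\|\ge\Omega(\zeta)$ from the federated-learning literature and then \emph{asserts} that for sufficiently large $\zeta$ the aggregated point ``may lie on the wrong side of the loss landscape,'' so the sign condition ``may no longer hold.'' The contradiction with strong convexity that you correctly spotted is never reconciled; the lemma should be read as a heuristic possibility claim (note the ``may lead to'' phrasing), with the downstream distance-increase calculation in the paper being conditional on that unproved possibility. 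Your quadratic-form analysis is in fact a cleaner way to see why the statement cannot be a strict theorem under Assumption~1.
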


\subsection{Framework}
As illustrated in Figure \ref{pic:framework}, the overall framework of FedEM consists of the following main steps in each round $t$: 
\\ \ding{182} Elastic Merging: client $u$ merges aggregated global model $\mathbf{Q}_u^{g(t-1)}$ and off-the-shelf local model $\mathbf{Q}_u^{l(t-1)}$ to derive a merged model $\mathbf{Q}_u^{m(t)}$ at the beginning of this round $t$. Here, existing FR methods follow the static replacement scheme and directly discard the local model.
\\ \ding{183} Local Training: client $u$ utilizes its own data $\mathcal{D}_{u}$ to update the merged model $\mathbf{Q}_u^{m(t)}$ and local parameters like $\mathbf{p}_u^{(t-1)}$. Then, client $u$ uploads the updated model $\mathbf{Q}_u^{l(t)}$ to the server.\\ \ding{184} Global Aggregation: the server performs aggregation based on the similarity over uploaded $\{\mathbf{Q}_1^{l(t)}, \cdots, \mathbf{Q}_n^{l(t)}\}$, generating the global model $\mathbf{Q}_u^{g(t)}$ for client $u$. 
\\\noindent Notably, both $\mathbf{Q}_u^{g(0)}$ and $\mathbf{Q}_u^{l(0)}$ represent randomly initialized models when $t=1$. After $T$ rounds above, each client can get its own personalized model. 

\subsection{Elastic Merging} 
\textbf{Theoretical Analysis.} Existing pFR methods, \textit{e.g.}, PFedRec~\cite{zhang2023dual}, FedRAP~\cite{li2023federated}, and FedCIA~\cite{han2025fedcia}, heuristically design additional personalization mechanisms. Although empirically effective, these methods fail to identify the aggregation bottleneck as the underlying cause of insufficient personalization in FR, thereby limiting their potential for further improvement. Additionally, their complexity and limited compatibility pose challenges for practical deployment. 

Inspired by the model merging theory~\cite{zhouemergence, yang2024model}, we merge the global model $\mathbf{Q}^{g(t)}$ and the off-the-shelf local model $\mathbf{Q}_u^{l(t)}$ to overcome the aggregation bottleneck. For embedding-based FRs, $\forall\rho\in[0,1]$,
\begin{equation}\label{eq:merging}
    f(\rho\mathbf{Q}^{g(t)}+(1-\rho)\mathbf{Q}_u^{l(t)})=\rho f(\mathbf{Q}^{g(t)})+(1-\rho)f(\mathbf{Q}_u^{l(t)}),
\end{equation}
where $f(\mathbf{Q})$ denotes the predicted rating, \textit{i.e.}, $\mathbf{p}_u\cdot\mathbf{Q}$. Thus, the two models $\mathbf{Q}^{g(t)}$ and $\mathbf{Q}_u^{l(t)}$ satisfy the model merging conditions in the parameter space~\cite{zhouemergence}. We define the merged model as:
\begin{equation}
\label{eq:merge}
\mathbf{Q}_u^{m(t+1)} = \rho \mathbf{Q}^{g(t)} + (1 - \rho) \mathbf{Q}_u^{l(t)},
\end{equation}
which is used as the initialization for local training in round $t+1$. By following the elastic merging scheme, we compensate for the loss of local information in the aggregated model. The merged model benefits from global collaborative training while mitigating the risk of deviating from the client-specific optimum, thus preserving personalization.

\begin{lemma}[Compensation Effect of Elastic Merging]\label{lemma:alignment}
Based on Assumption~\ref{assumption:smooth_strong}, the merged model can compensate for the optimization deviation caused by aggregation, guiding the update toward the client-specific optimum, \textit{i.e.}, it satisfies:
\begin{equation}
\begin{split}
&\langle \nabla \mathcal{L}_u(\mathbf{Q}_u^{m(t+1)}),\, \mathbf{Q}_u^{m(t+1)} - \mathbf{Q}_u^* \rangle\\
&\geq (1 - \rho)\langle \nabla \mathcal{L}_u(\mathbf{Q}_u^{l(t)}),\, \mathbf{Q}_u^{l(t)} - \mathbf{Q}_u^* \rangle - \rho C,
\end{split}
\end{equation}
where $C$ denotes the deviation introduced by global aggregation. This deviation can be lower-bounded by both the state of the local model $\mathbf{Q}_u^{l(t)}$ and the global-local discrepancy $\mathbf{Q}_u^{\Delta(t)} = \mathbf{Q}^{g(t)} - \mathbf{Q}_u^{l(t)}$. The first term on the right-hand side of the inequality, $A = \langle \nabla \mathcal{L}_u(\mathbf{Q}_u^{l(t)}),\, \mathbf{Q}_u^{l(t)} - \mathbf{Q}_u^* \rangle>0$, reflects the preservation of client-specific optimization information. For effective local personalization, the update must remain aligned with the client-specific optimum, which requires the left-hand side of the inequality to be positive. To achieve this, we need to adaptively control the merging weight $ 0 \leq \rho < \frac{A}{A + C}$ based on the deviation $C$ in each round. Accordingly, we propose the Elastic Merging module, which learns $\rho$ from $\mathbf{Q}_u^{l(t)}$ and $\mathbf{Q}_u^{\Delta(t)}$, ensuring client-wise alignment and mitigating the aggregation bottleneck.
\end{lemma}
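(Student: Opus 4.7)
The plan is to exploit the convex-combination structure $\mathbf{Q}_u^{m(t+1)} = \rho\mathbf{Q}^{g(t)} + (1-\rho)\mathbf{Q}_u^{l(t)}$ together with the $L$-smoothness and $\mu$-strong convexity granted by Assumption~\ref{assumption:smooth_strong}. The two geometric identities I would lean on are $\mathbf{Q}_u^{m(t+1)} - \mathbf{Q}_u^{l(t)} = \rho\,\mathbf{Q}_u^{\Delta(t)}$, which controls the distance the merged point has drifted away from the local model, and $\mathbf{Q}_u^{m(t+1)} - \mathbf{Q}_u^* = \rho(\mathbf{Q}^{g(t)}-\mathbf{Q}_u^*) + (1-\rho)(\mathbf{Q}_u^{l(t)}-\mathbf{Q}_u^*)$, which is the convex split that generates the $(1-\rho)A$ piece.

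Substituting the convex decomposition into the inner product yields
\begin{equation*}
\rho\bigl\langle\nabla\mathcal{L}_u(\mathbf{Q}_u^{m(t+1)}),\mathbf{Q}^{g(t)}{-}\mathbf{Q}_u^*\bigr\rangle + (1{-}\rho)\bigl\langle\nabla\mathcal{L}_u(\mathbf{Q}_u^{m(t+1)}),\mathbf{Q}_u^{l(t)}{-}\mathbf{Q}_u^*\bigr\rangle.
\end{equation*}
For the second summand I would add and subtract $\nabla\mathcal{L}_u(\mathbf{Q}_u^{l(t)})$ to pull out $A$ exactly, and bound the residual via Cauchy--Schwarz combined with $L$-smoothness, giving $\|\nabla\mathcal{L}_u(\mathbf{Q}_u^{m(t+1)}) - \nabla\mathcal{L}_u(\mathbf{Q}_u^{l(t)})\| \le L\rho\|\mathbf{Q}_u^{\Delta(t)}\|$. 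For the first summand I would apply Cauchy--Schwarz in the other direction, then use $\nabla\mathcal{L}_u(\mathbf{Q}_u^*) = 0$ with $L$-smoothness to obtain $\|\nabla\mathcal{L}_u(\mathbf{Q}_u^{m(t+1)})\| \le L\|\mathbf{Q}_u^{m(t+1)} - \mathbf{Q}_u^*\|$, and re-expand the distance through the second identity so that $\mathbf{Q}_u^{l(t)}$ and $\mathbf{Q}_u^{\Delta(t)}$ appear explicitly.

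Collecting the slack terms would let me set
\begin{equation*}
C = (1-\rho)L\,\|\mathbf{Q}_u^{\Delta(t)}\|\,\|\mathbf{Q}_u^{l(t)} - \mathbf{Q}_u^*\| + L\,\|\mathbf{Q}_u^{m(t+1)} - \mathbf{Q}_u^*\|\,\|\mathbf{Q}^{g(t)} - \mathbf{Q}_u^*\|,
\end{equation*}
which is visibly a function of both the local state $\mathbf{Q}_u^{l(t)}$ and the discrepancy $\mathbf{Q}_u^{\Delta(t)}$, as the lemma claims. Positivity of $A$ then follows from the cocoercivity inequality for $\mu$-strongly convex functions applied at $\mathbf{Q}_u^{l(t)}$ and $\mathbf{Q}_u^*$, giving $A \ge \mu\|\mathbf{Q}_u^{l(t)} - \mathbf{Q}_u^*\|^2 > 0$ before local convergence. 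The admissible weight range $0 \le \rho < A/(A+C)$ is obtained by imposing $(1-\rho)A - \rho C > 0$ and solving.

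The main obstacle I expect is packaging $C$ in a form that is both tight and \emph{client-observable}: the natural bounds involve the unobservable distance $\|\mathbf{Q}_u^{l(t)} - \mathbf{Q}_u^*\|$ to the local optimum, so a single scalar $C$ has to be expressed in terms of quantities the client can actually compute. A clean workaround is the strong-convexity surrogate $\|\mathbf{Q}_u^{l(t)} - \mathbf{Q}_u^*\| \le \|\nabla\mathcal{L}_u(\mathbf{Q}_u^{l(t)})\|/\mu$, which preserves the inequality while turning every factor of $C$ into a function of $\mathbf{Q}_u^{l(t)}$ and $\mathbf{Q}_u^{\Delta(t)}$; this is precisely the structural information the Elastic Merging module is designed to consume when learning $\rho$ adaptively, so the proof and the algorithmic design end up matching exactly.
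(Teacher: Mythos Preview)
Your proposal is correct and follows essentially the same strategy as the paper: use the convex split $\mathbf{Q}_u^{m(t+1)}-\mathbf{Q}_u^*=\rho(\mathbf{Q}^{g(t)}-\mathbf{Q}_u^*)+(1-\rho)(\mathbf{Q}_u^{l(t)}-\mathbf{Q}_u^*)$ to isolate $(1-\rho)A$, control the gradient drift via $L$-smoothness and $\|\mathbf{Q}_u^{m(t+1)}-\mathbf{Q}_u^{l(t)}\|=\rho\|\mathbf{Q}_u^{\Delta(t)}\|$, and bound the remainder by Cauchy--Schwarz and the triangle inequality. The only organisational difference is the order of decomposition. The paper first writes $\nabla\mathcal{L}_u(\mathbf{Q}_u^{m(t+1)})=\nabla\mathcal{L}_u(\mathbf{Q}_u^{l(t)})+\mathbf{R}$ with $\|\mathbf{R}\|\le L\rho\|\mathbf{Q}_u^{\Delta(t)}\|$ and \emph{then} splits the direction vector, arriving at
\[
C=\|\nabla\mathcal{L}_u(\mathbf{Q}_u^{l(t)})\|\bigl(\|\mathbf{Q}_u^{l(t)}-\mathbf{Q}_u^*\|+\|\mathbf{Q}_u^{\Delta(t)}\|\bigr)+L\|\mathbf{Q}_u^{\Delta(t)}\|\bigl(\|\mathbf{Q}_u^{l(t)}-\mathbf{Q}_u^*\|+\rho\|\mathbf{Q}_u^{\Delta(t)}\|\bigr),
\]
so the local gradient norm $\|\nabla\mathcal{L}_u(\mathbf{Q}_u^{l(t)})\|$ appears directly rather than being converted to a distance via $\nabla\mathcal{L}_u(\mathbf{Q}_u^*)=0$. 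You instead split the direction first and perturb the gradient only inside the $(1-\rho)$ piece, which produces your slightly different $C$ with the extra factor $L\|\mathbf{Q}_u^{m(t+1)}-\mathbf{Q}_u^*\|$. Both bounds are valid and both still contain the unobservable $\|\mathbf{Q}_u^{l(t)}-\mathbf{Q}_u^*\|$; the paper leaves it at that and simply argues that $C$ is ``determined by $\mathbf{Q}_u^{l(t)}$ and $\mathbf{Q}_u^{\Delta(t)}$'', whereas your strong-convexity surrogate step goes a bit further toward making everything client-computable. Either packaging supports the conclusion $0\le\rho<A/(A+C)$.
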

\noindent Detailed proofs of all lemmas are deferred to \textit{\textbf{Appendix A}}.

\noindent \textbf{Practical Implementation.} On the client side, we introduce the Elastic Merging (EM) module to dynamically preserve personalized information from the local model during global collaborative training. Furthermore, we utilize the vector $\bm{\rho} \in \mathbb{R}^m$ for fine-grained merging at the item level. Based on the analysis in Lemma~\ref{lemma:alignment}, we first compute the global-local model discrepancy $\mathbf{Q}_u^{\Delta(t-1)} = \mathbf{Q}_u^{g(t-1)} - \mathbf{Q}_u^{l(t-1)}$, and then concatenate it with the local model $\mathbf{Q}_u^{l(t-1)}$ to obtain $\mathbf{Q}_u^{\text{cat}(t-1)} = \mathbf{Q}_u^{\Delta(t-1)} \oplus \mathbf{Q}_u^{l(t-1)}$. We feed $\mathbf{Q}_u^{\text{cat}(t-1)}\in\mathbb{R}^{m\times2d}$ into the core component of the EM module, \textit{i.e.}, the adapter, to generate the elastic vector $\bm{\rho}$ for each client $u$. 

Here, we implement the adapter with an $L$-layer multilayer perceptron (MLP). Formally, we have:
\begin{equation}
\label{eq:mlp_output}
\bm{\rho}=\phi_{output}(\Theta_u^L\bullet(\cdots\phi(\Theta_u^2\bullet(\phi(\Theta_u^1\bullet\mathbf{Q}_u^{\text{cat}(t-1)})))),
\end{equation}
\noindent where $\phi(\cdot)$ and $\phi_{output}(\cdot)$ denote the mapping functions for the intermediate and output layers, respectively. Here, we employ ReLU for $\phi$ and Sigmoid for $\phi_{output}$ to ensure that each element of vector $\bm{\rho}$ is in the range $[0,1]$. Besides, notation $\bullet$ denotes the matrix product operation, and $\Theta_u^i$ denotes the parameters of the $i$-th layer. Then, the merged model can be formulated as:
\begin{equation}
\begin{split}
    \mathbf{Q}_u^{m(t+1)} & = \bm{\rho} \odot \mathbf{Q}_u^{g(t-1)} + (\mathbf{1} - \bm{\rho}) \odot \mathbf{Q}_u^{l(t-1)} \\& = \mathbf{Q}_u^{l(t-1)} + \bm{\rho} \odot \mathbf{Q}_u^{\Delta(t-1)},
\end{split}
\end{equation}
where $\odot$ denotes the Hadamard product. Hence, by preserving the local information in $\mathbf{Q}_u^{l(t)}$, $\bm{\rho}$ serves to selectively absorb the collaborative information in $\mathbf{Q}_u^{\Delta(t)}$, thus preventing the aforementioned aggregation bottleneck. 

Specifically, the adapter parameters $\Theta_u$ can be updated using stochastic gradient descent (SGD) as follows:

\begin{equation}
\label{eq:elastic_adaptation}
\Theta_u=\Theta_u-\beta \cdot \frac{\partial \mathcal{L}_{u}(\mathbf{p}_u^{(t-1)}, \mathbf{Q}_u^{m(t)} ; \mathcal{D}_u)}{\partial \Theta_u},
\end{equation}

\noindent where $\beta$ is the learning rate for the adapter. Notably, only the parameters $\Theta_u$ are updated with $\bm{\rho}$ varied in this step, while other model parameters, \textit{i.e.}, $\mathbf{p}_u^{(t-1)}$, $\mathbf{Q}_u^{l(t-1)}$ and $\mathbf{Q}_u^{g(t-1)}$, are frozen. Unlike other FR methods, $\mathbf{Q}_u^{m(t)}$ is utilized to initialize the local model for local training rather than $\mathbf{Q}_u^{g(t-1)}$.


\subsection{Local Training} 
The objective of local training is to update the user embedding $\mathbf{p}_u^{(t-1)}$ and the item embedding table $\mathbf{Q}_u^{m(t)}$ to obtain $\mathbf{p}_u^{(t)}$ and $\mathbf{Q}_u^{l(t)}$. Note that the adapter $\Theta_u$ is not involved during this process. Formally, the update can be written as:
\begin{equation}
\label{eq:local_training}
\mathbf{p}_u=\mathbf{p}_u-\eta \cdot \frac{\partial \mathcal{L}_{u}}{\partial \mathbf{p}_u},\quad \mathbf{Q}_u^l=\mathbf{Q}_u^l-\eta \cdot \frac{\partial \mathcal{L}_{u}}{\partial \mathbf{Q}_u^l},
\end{equation}
\noindent where $\eta$ is the learning rate for local training. To protect user privacy, after local training, client $u$ only uploads $\mathbf{Q}_u^{l(t)}$ to the server for global aggregation.

\subsection{Global Aggregation} 
As shown in Lemma~\ref{lemma:misalignment}, traditional fixed-weight aggregation is ineffective in FR, and the aggregation bottleneck is further exacerbated by inherent client heterogeneity.  In addition to local elastic merging, we further perform global similarity-based aggregation to alleviate the impact of such heterogeneity. Inspired by the previous work \cite{ye2023personalized} in FL, we additionally incorporate client similarity, generating a client-specific weight vector $\mathbf{w}_u=[w_{u1}, w_{u2}, \dots, w_{un}]\in\mathbb{R}^n$ to conduct tailored aggregation for each client $u$. Specifically, $\mathbf{w}_u$ is obtained by optimizing the following objective:
\begin{equation}
\label{eq:optimize_w}
\begin{split}
\mathcal{L}_g = \sum_{v=1}^n \left((w_{uv}-p_v)^2 + \alpha (w_{uv}-\sigma(\mathbf{Q}_u^{l},\mathbf{Q}_v^{l}))^2\right),
\end{split}
\end{equation}
\noindent where $p_u = |\mathcal{D}_u| / \sum_{v=1}^n |\mathcal{D}_v|$ denotes the aggregation weight used in the backbone FedMF~\cite{chai2020secure}. Here, $\sigma(\mathbf{Q}_u^l, \mathbf{Q}_v^l)=1/(1+\|\mathbf{Q}_u^l-\mathbf{Q}_v^l\|^2)$ denotes the similarity function. This term ensures that the aggregation weight $w_{uv}$ increases when the two client models are highly similar. Besides, $\alpha$ is a similarity-related hyperparameter. After global aggregation, each client $u$ can download the global model $\mathbf{Q}_u^{g(t)}=\sum_{v=1}^n w_{uv}\mathbf{Q}_v^{l(t)}$ to start the next round $t+1$. 

\noindent The detailed procedure of our algorithm is in \textit{\textbf{Appendix B}}.


\section{Discussions}
\subsection{Complexity Analysis}
Given $n$ clients and $m$ items, with embedding dimension $d$, the computational complexity of the local backbone model is $\mathcal{O}((m + 1)d)$. Our proposed EM module is implemented as an $L$-layer MLP, whose complexity is $\mathcal{O}(Ld^2)$. Since $m \gg d > L$ in practice, the EM module introduces negligible overhead to the local device, making it well-suited for on-device recommendation.


\subsection{Privacy Analysis}
FedEM naturally preserves user privacy under the federated paradigm. Moreover, the local merging parameters $\Theta_u$ and $\bm{\rho}$ are not shared with the server, reducing the risk of privacy leakage~\cite{chai2020secure}. To further enhance privacy, we incorporate local differential privacy (LDP)~\cite{qi2024towards} into our method. The privacy budget $\varepsilon$ is guaranteed by $\mathcal{S}_u / \delta$, where $\delta$ is the noise strength and $\mathcal{S}_u$ denotes the global sensitivity of client $u$. We upper-bound $\mathcal{S}_u$ by $2p_u\eta Z$, where $Z$ is the gradient clipping threshold.


\noindent Detailed discussions can be found in \textit{\textbf{Appendix C}}.

\section{Experiments}
\subsection{Experimental Settings}
\textbf{Datasets.} We evaluate our proposed method on four datasets with varying client scale and data sparsity: Filmtrust~\cite{filmtrust_2013}, ML-100K \cite{movielens_2015}, ML-1M~\cite{movielens_2015}, and LastFM-2K~\cite{cantador2011second}.\\
\noindent\textbf{Evaluation Protocols.} We follow the popular \textit{leave-one-out} evaluation \cite{bayer2017generic, he2017neural}, and report the performance by \textit{Hit Ratio} (HR@$K$) and \textit{Normalized Discounted Cumulative Gain} (NDCG@$K$) \cite{he2015trirank}.\\
\noindent\textbf{Compared Baselines.} We compare FedEM with SOTA centralized methods, \textit{e.g.}, MF~\cite{koren2009matrix}, NCF~\cite{he2017neural}, LightGCN~\cite{he2020lightgcn}, and federated methods, \textit{e.g.}, FedMF~\cite{chai2020secure}, FedNCF~\cite{perifanis2022federated}, FedFast~\cite{muhammad2020fedfast}, PFedRec~\cite{zhang2023dual}, CoLR~\cite{nguyen2024towards}, GPFedRec~\cite{zhang2024gpfedrec}, FedRAP~\cite{li2023federated}, FedCIA~\cite{han2025fedcia}.

\noindent\textbf{Implementation Details.} For a fair comparison, we set the global round $T=100$, batch size $B=256$, and embedding dimension $d=16$ for all methods. Besides, we adopt the default/optimal settings for other hyperparameters as reported in their original papers. All baseline methods reach convergence under the given settings.\\
\noindent Detailed experimental settings are provided in \textit{\textbf{Appendix D}}.

\subsection{Overall Performance}\label{sec:main_exp}

\begin{table*}[htbp]
\centering
\resizebox{\textwidth}{!}{
\Large
\begin{tabular}{llcccccccccccc}
\toprule
\multirow{2}{*}{\textbf{Datasets}} & \multirow{2}{*}{\textbf{Methods}}& \multicolumn{3}{c}{\textbf{CenRec}} & \multicolumn{8}{c}{\textbf{FedRec}} & \multicolumn{1}{c}{\textbf{Ours}} \\
\cmidrule(lr){3-5}\cmidrule(lr){6-13}\cmidrule(lr){14-14}& & \textbf{MF} & \textbf{NCF} & \textbf{LightGCN} & \textbf{FedMF} & \textbf{FedNCF} & \textbf{FedFast} & \textbf{PFedRec}& \textbf{CoLR} & \textbf{GPFedRec} & \textbf{FedRAP} & \textbf{FedCIA} & \textbf{FedEM} \\
\midrule
\multirow{2}{*}{\textbf{Filmtrust}} & HR@10 &0.6936 & 0.6856 & 0.7804 & 0.6577 & 0.6597 & 0.6637 & 0.6756 & 0.6377 & 0.6836 & 0.8024 & \underline{0.8543} & \textbf{0.8932} \\
& NDCG@10 & 0.5341 & 0.5476 & 0.6475 & 0.5290 & 0.5337 & 0.4951 & 0.5398 & 0.5105 & 0.5425 & 0.5455 & \underline{0.6992} & \textbf{0.7701} \\
\midrule
\multirow{2}{*}{\textbf{ML-100K}} & HR@10 & 0.6585 & 0.6066 & 0.8356 & 0.4889 & 0.4252 & 0.4687 & 0.6882 & 0.4952 & 0.7010 & 0.8897 & \underline{0.9512} & \textbf{0.9958} \\
& NDCG@10 & 0.3781 & 0.3398 & 0.5754 & 0.2721 & 0.2290 & 0.2702 & 0.3913 & 0.2772 & 0.4069 & \underline{0.7950} & 0.7741 & \textbf{0.9427} \\
\midrule
\multirow{2}{*}{\textbf{ML-1M}} & HR@10 & 0.6053 & 0.5897 & 0.8217 & 0.4871 & 0.4230 & 0.4137 & 0.6730 & 0.4533 & 0.6776 & 0.8619 & \underline{0.9012} & \textbf{0.9507} \\
& NDCG@10 & 0.3376 & 0.3325 & 0.5478 & 0.2733 & 0.2285 & 0.2333 & 0.3898 & 0.2475 & 0.3973 & 0.7661 & \underline{0.7890} & \textbf{0.8392} \\
\midrule
\multirow{2}{*}{\textbf{LastFM-2K}} & HR@10 & 0.8440 & 0.7904 & 0.8463 & 0.5934 & 0.4996 & 0.5225 & 0.7833 & 0.5335 & \textbf{0.7975} & 0.6210 & 0.7108 & \underline{0.7935} \\
& NDCG@10 & 0.6161 & 0.6024 & 0.6890 & 0.3963 & 0.3307 & 0.3239 & \underline{0.6822} & 0.3580 & 0.6690 & 0.5923 & 0.6601 & \textbf{0.7151} \\
\bottomrule
\end{tabular}
}
\caption{Performance comparison on four datasets, reported by HR@10 and NDCG@10. CenRec and FedRec represent centralized and federated recommendation methods, respectively. The best FedRec results are bold, and the second ones are underlined.}
\label{tab:exp_main}
\end{table*}

The performance comparison of different methods is summarized in Table~\ref{tab:exp_main}. FedEM outperforms centralized methods on most datasets. As a pFR method, FedEM customizes item embeddings for each client, enabling it to better capture user preferences compared to centralized methods that share the same item embeddings across all clients. FedEM also surpasses other federated methods. Existing pFR methods heuristically design additional personalization mechanisms without tackling the aggregation bottleneck, \textit{i.e.}, the loss of local personalization caused by global aggregation. In contrast, our method effectively and efficiently leverages off-the-shelf personalized information from the local model to bridge this gap. 
Additionally, on relatively dense datasets such as ML-100K and ML-1M, pFR methods, including FedEM, can surpass centralized methods due to the locally sufficient training of personalized models. However, for highly sparse datasets such as LastFM-2K, with a sparsity level of up to $99.07\%$, the federated setting severely limits local data availability, making it difficult to outperform centralized methods. Nevertheless, FedEM still achieves superior performance compared to all pFR methods under such challenging scenarios, demonstrating its general applicability.

\begin{figure}[ht]
\centering
\includegraphics[width=0.98\columnwidth]{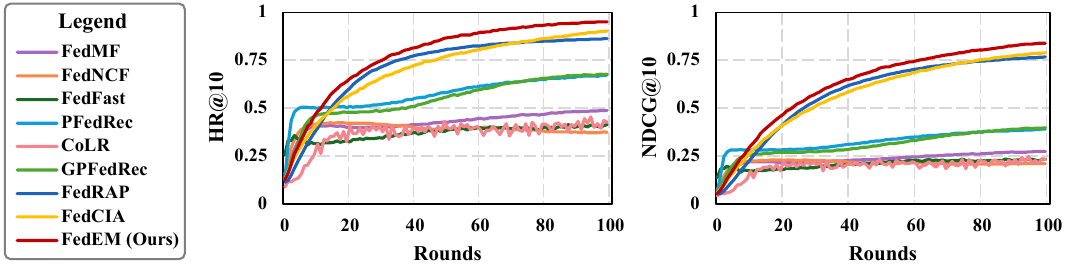} 
\caption{Convergence comparison on the ML-1M dataset.}
\label{pic:convergence_ML-1M}
\end{figure}

We further compare the convergence of different methods, with results on the ML-1M dataset shown in Figure~\ref{pic:convergence_ML-1M}. Compared with other methods, FedEM demonstrates a faster convergence in the early stages and achieves more stable and superior performance in the later stages. This is attributed to the elastic merging scheme, which retains the local information from the previous round, thereby enhancing both performance and convergence stability.

\begin{table}[ht]
    \centering
    \resizebox{0.485\textwidth}{!}{
    \large
    \begin{tabular}{llcccccc}
    \toprule
    \textbf{Datasets} & \textbf{Metrics} & \textbf{\makecell{FedMF\\w/ SR}} & \textbf{\makecell{FedSim\\w/ SR}} & \textbf{\makecell{FedSim\\w/ SM}} & \textbf{\makecell{FedSim\\w/ DM}} & \textbf{\makecell{FedSim\\w/ EM}}\\
    \midrule
    \multirow{2}{*}{\textbf{Filmtrust}} & \normalsize HR@10 & 0.6577 & 0.7206 & 0.7774 & 0.8433 & \textbf{0.8932} \\
     & \normalsize NDCG@10 & 0.5290 & 0.5501 & 0.5959 & 0.7051 & \textbf{0.7701} \\
    \midrule
    \multirow{2}{*}{\textbf{ML-100K}} & \normalsize HR@10 & 0.4889 & 0.6320 & 0.7190 & 0.9427 & \textbf{0.9958} \\
     & \normalsize NDCG@10 & 0.2721 & 0.4457 & 0.5277 & 0.8175 & \textbf{0.9427} \\
    \midrule
    \multirow{2}{*}{\textbf{ML-1M}} & \normalsize HR@10 & 0.4871 & 0.6166 & 0.7219 & 0.9078 & \textbf{0.9507} \\
     & \normalsize NDCG@10 & 0.2733 & 0.4103 & 0.5278 & 0.7839 & \textbf{0.8392} \\
    \midrule
    \multirow{2}{*}{\textbf{LastFm-2K}} & \normalsize HR@10 & 0.5934 & 0.7258 & 0.7455 & 0.7912 & \textbf{0.7935} \\
     & \normalsize NDCG@10 & 0.3963 & 0.6498 & 0.6633 & 0.7131 & \textbf{0.7151} \\
    \bottomrule
    \end{tabular}
    }
    \caption{Ablation study results. The best results are in bold.}
    \label{tab:exp_ab}
\end{table}

\begin{figure}[t]
\centering
\includegraphics[width=0.98\columnwidth]{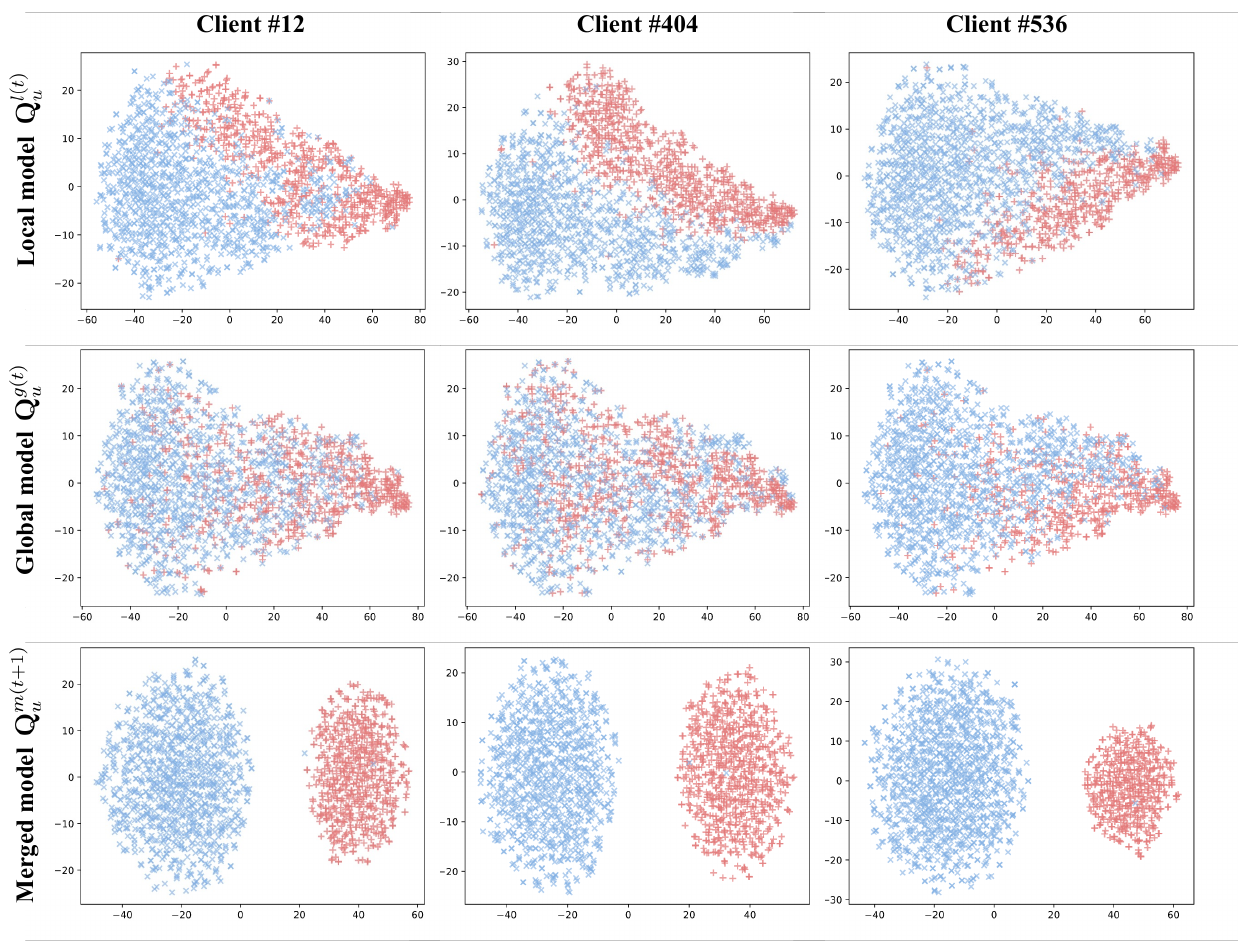} 
\caption{Model visualization of item embeddings on ML-100K. Items interacted with by the user are marked with red $\bm{+}$, while unobserved items are marked with blue $\bm{\times}$.}
\label{pic:vis_ea}
\end{figure}

\subsection{Ablation Study}
\textbf{Component analysis.} Based on the backbone FedMF, we apply similarity-based aggregation on the server side and denote this variant as FedSim. Moreover, to validate the rationality and effectiveness of the EM module, we introduce the following variants: (1) Static Replacement (SR): Replace the local model with the global model for local training. (2) Static Merging (SM): Set a fixed scalar $\rho$ shared across all clients to merge the global and local models. (3) Dynamic Merging (DM): Each client adopts a local adapter to generate a personalized scalar $\rho$ for model merging. (4) Elastic Merging (EM): The adapter generates a vector $\bm{\rho}$ to perform fine-grained model merging at the item level for each client.

From the results in Table~\ref{tab:exp_ab}, we observe that incorporating similarity into global aggregation helps mitigate the negative impact of client heterogeneity on local personalization. Focusing on local strategies, the SR scheme adopted by existing methods directly discards the local model, leading to suboptimal performance. By merging the global and local models, SM retains some personalized information and achieves better results. However, SM applies the same merging weight to all clients, neglecting client heterogeneity and diverse personalization needs. DM improves upon SM by introducing the adapter to generate client-specific weights, enabling more tailored merging. Building on DM, EM further leverages the embedding-based nature of recommendation models to perform fine-grained merging at the item level. This design allows for a more balanced fusion of collaborative information from global aggregation and the off-the-shelf personalized information from local training, offering a more efficient and effective solution.

\begin{table*}[htbp]
\centering
\resizebox{0.99\textwidth}{!}{
\Large
\begin{tabular}{llccccccccccccccc}
\toprule
\multirow{2}{*}{\textbf{Datasets}}& \multirow{2}{*}{\textbf{Metrics}}& \multicolumn{3}{c}{\textbf{FedMF}} & \multicolumn{3}{c}{\textbf{FedNCF}} & \multicolumn{3}{c}{\textbf{PFedRec}} & \multicolumn{3}{c}{\textbf{GPFedRec}} & \multicolumn{3}{c}{\textbf{FedRAP}} \\
\cmidrule(lr){3-5}\cmidrule(lr){6-8}\cmidrule(lr){9-11}\cmidrule(lr){12-14}\cmidrule(lr){15-17}
& & w/o EM & w/ EM & Improv. & w/o EM & w/ EM & Improv. & w/o EM & w/ EM & Improv. & w/o EM & w/ EM & Improv. & w/o EM & w/ EM & Improv. \\
\midrule
\multirow{2}{*}{\textbf{Filmtrust}} & HR@10 & 0.6577 & 0.7804 & $\uparrow 18.66\%$ & 0.6597 & 0.7315 & $\uparrow 10.88\%$ & 0.6756 & 0.9691 & $\uparrow 43.44\%$ & 0.6836 & 0.8723 & $\uparrow 27.60\%$ & 0.8024 & 0.8044 & $\uparrow 0.25\%$ \\
& NDCG@10 & 0.5290 & 0.6426 & $\uparrow 21.47\%$ & 0.5337 & 0.6852 & $\uparrow 28.39\%$ & 0.5398 & 0.8544 & $\uparrow 58.28\%$ & 0.5425 & 0.6706 & $\uparrow 23.61\%$ & 0.5455 & 0.5498 & $\uparrow 0.79\%$ \\
\midrule
\multirow{2}{*}{\textbf{ML-100K}} & HR@10 & 0.4889 & 0.8759 & $\uparrow 79.16\%$ & 0.4252 & 0.7529 & $\uparrow 77.07\%$ & 0.6882 & 0.9905 & $\uparrow 43.93\%$ & 0.7010 & 0.9247 & $\uparrow 31.91\%$ & 0.8897 & 0.9735 & $\uparrow 9.42\%$ \\
& NDCG@10 & 0.2721 & 0.7533 & $\uparrow 176.85\%$ & 0.2290 & 0.7162 & $\uparrow 212.75\%$ & 0.3913 & 0.9238 & $\uparrow 136.08\%$ & 0.4069 & 0.6938 & $\uparrow 70.51\%$ & 0.7950 & 0.8964 & $\uparrow 12.75\%$ \\
\midrule
\multirow{2}{*}{\textbf{ML-1M}} & HR@10 & 0.4871 & 0.8199 & $\uparrow 68.32\%$ & 0.4230 & 0.7606 & $\uparrow 79.81\%$ & 0.6730 & 0.9656 & $\uparrow 43.48\%$ & 0.6776 & 0.8823 & $\uparrow 30.21\%$ & 0.8619 & 0.9538 & $\uparrow 10.66\%$ \\
& NDCG@10 & 0.2733 & 0.6606 & $\uparrow 141.71\%$ & 0.2285 & 0.6979 & $\uparrow 205.43\%$ & 0.3898 & 0.7978 & $\uparrow 104.67\%$ & 0.3973 & 0.6777 & $\uparrow 70.58\%$ & 0.7661 & 0.8648 & $\uparrow 12.88\%$ \\
\midrule
\multirow{2}{*}{\textbf{LastFM-2K}} & HR@10& 0.5934 & 0.7013 & $\uparrow 18.18\%$ & 0.4925 & 0.6564 & $\uparrow 33.28\%$ & 0.7833 & 0.8810 & $\uparrow 12.47\%$ & 0.7975 & 0.8385 & $\uparrow 5.14\%$ & 0.6210 & 0.6320 & $\uparrow 1.77\%$ \\
& NDCG@10 & 0.3963 & 0.6061 & $\uparrow 52.94\%$ & 0.3215 & 0.5445 & $\uparrow 69.36\%$ & 0.6822 & 0.8052 & $\uparrow 18.03\%$ & 0.6690 & 0.7475 & $\uparrow 11.73\%$ & 0.5923 & 0.5869 & -- \\
\bottomrule
\end{tabular}
}
\caption{Performance improvement by integrating the Elastic Merging (EM) module into existing FR/pFR baselines. “Improv.” indicates the performance gain over the original baselines.}
\label{tab:exp_plugin}
\end{table*}

\noindent\textbf{Model Visualization.} We incorporate the EM module into FedMF and perform t-SNE visualization of the local, global, and merged models, as shown in Figure~\ref{pic:vis_ea}. The locally trained model contains rich personalized information, with the preferred items being relatively concentrated. Although global aggregation introduces more diverse collaborative information into the model, it also disrupts user preferences. Using such a global model for local training can harm personalization. In contrast, the EM module performs model merging at the item level, effectively leveraging the off-the-shelf personalized information from the local model while selectively incorporating collaborative information from the global model, leading to better performance.

\subsection{Compatibility Study}
Our proposed EM module can be seamlessly integrated as a plug-in into existing FR/pFR methods, aiming to overcome the performance bottleneck caused by global aggregation. Specifically, we take FedMF, FedNCF, PFedRec, GPFedRec, and FedRAP as examples, where elastic merging is performed after the client downloads the global model, instead of following static replacement. As shown in Table~\ref{tab:exp_plugin}, all methods exhibit significant performance improvements after incorporating the EM module, validating both the limitations of the static replacement and the effectiveness of elastic merging. The EM module is lightweight and easily integrable, incurring little to no overhead on local clients and demonstrating good compatibility with existing models.

\subsection{Hyperparameter Analysis}

\textbf{Adapter architecture.} In our experiments, we set the embedding dimension to $d=16$, and by default, the adapter is implemented as an MLP with three hidden layers, following the structure $[32, 16, 8, 1]$. Additionally, we explore MLPs with varying numbers of hidden layers, and the corresponding results are illustrated in Figure~\ref{pic:param_structure}. When the number of MLP layers $L$ is small, the adapter struggles to effectively balance the local and global models, leading to suboptimal results. As $L$ increases, the model performance does not improve significantly, but the adapter becomes more complex. The default structure achieves a good trade-off between performance and architectural simplicity.

\noindent \textbf{Similarity Coefficient.} During global aggregation, we find that setting the similarity coefficient $\alpha$ around 1 can effectively alleviate the negative impact of heterogeneity discussed in Lemma~\ref{lemma:misalignment} and improve overall model performance.

\begin{figure}[ht]
\centering
\includegraphics[width=0.95\columnwidth]{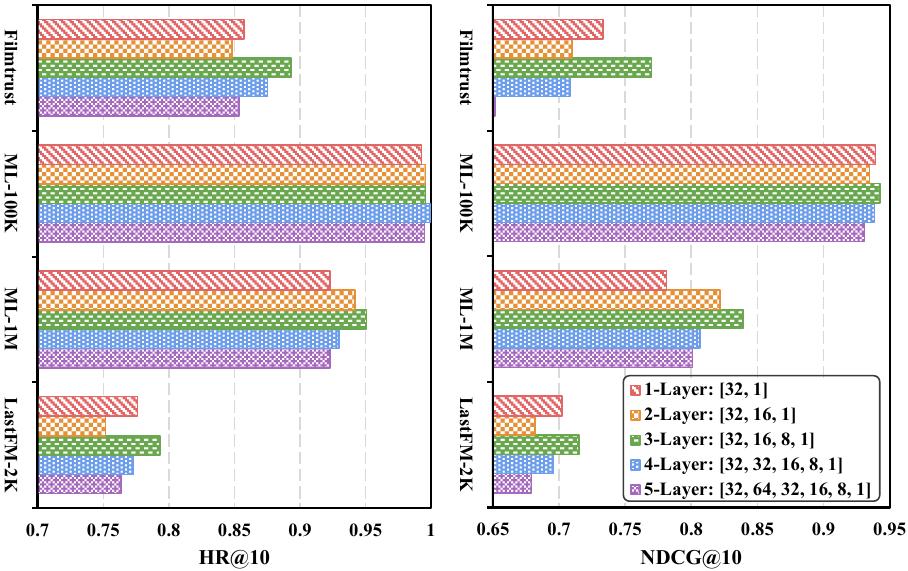} 
\caption{Performance under different adapter architectures.}
\label{pic:param_structure}
\end{figure}

\subsection{Privacy Protection}
We evaluate the performance of privacy-enhanced FedEM by incorporating the local differential privacy (LDP) strategy. As shown in Table~\ref{tab:exp_ldp}, the performance on all datasets slightly degrades with LDP, but remains within an acceptable range and still outperforms other baselines, demonstrating the robustness of our method.

\begin{table}[ht]
    \centering
    \resizebox{0.48\textwidth}{!}{
    \Large
    \begin{tabular}{lcccccccc}
    \toprule
    \multirow{2}{*}{\textbf{Datasets}} & \multicolumn{2}{c}{\textbf{Filmtrust}} & \multicolumn{2}{c}{\textbf{ML-100K}} & \multicolumn{2}{c}{\textbf{ML-1M}} & \multicolumn{2}{c}{\textbf{LastFM-2K}} \\ \cmidrule(lr){2-3}\cmidrule(lr){4-5}\cmidrule(lr){6-7}\cmidrule(lr){8-9}
    & \normalsize HR@10 & \normalsize NDCG@10 & \normalsize HR@10 & \normalsize NDCG@10 & \normalsize HR@10 & \normalsize NDCG@10 & \normalsize HR@10 & \normalsize NDCG@10 \\
    \midrule
    \textbf{w/o LDP} & 0.8932 & 0.7701 & 0.9958 & 0.9427 & 0.9507 & 0.8392 & 0.7935 & 0.7151 \\
    \textbf{w/ LDP} & 0.8802 & 0.7353 & 0.9936 & 0.9399 & 0.9392 & 0.8193 & 0.7707 & 0.6930 \\
    \textbf{Degradation} & $\downarrow 1.46\%$ & $\downarrow 4.52\%$ & $\downarrow 0.22\%$ & $\downarrow 0.30\%$ & $\downarrow 1.21\%$ & $\downarrow 2.37\%$ & $\downarrow 2.87\%$ & $\downarrow 3.09\%$ \\
    \bottomrule
    \end{tabular}
    }
    \caption{Performance of applying LDP to our method.}
    \label{tab:exp_ldp}
\end{table}

\noindent Extensive experimental results are provided in \textit{\textbf{Appendix E}}.

\section{Conclusion}
In this work, we experimentally and theoretically identify a performance bottleneck in FR caused by global aggregation. Unlike existing methods that heuristically design personalization mechanisms, we address this bottleneck directly with a simple yet effective method called FedEM. Grounded in model merging theory, FedEM exploits the off-the-shelf local models to compensate for the aggregated global model. It elastically strikes a balance between global collaboration and local personalization, achieving excellent performance and compatibility compared to other SOTA methods.

\bibliography{aaai2026}

\appendix
\section{Appendix Overview}
We provide the following supplementary materials to further support and elaborate on the main content of our paper:

\begin{enumerate}[label=\Alph*.]
    \item \textbf{Detailed Theoretical Analysis}: We present formal assumptions and detailed proofs for our theoretical results, including the Performance Bottleneck of Global Aggregation and the Compensation Effect of Elastic Merging.
	\item \textbf{Algorithms}: A comprehensive description of the FedEM algorithm is provided, including the full procedural flow and pseudocode.
	\item \textbf{Detailed Discussions}: We further discuss the complexity and privacy of the proposed method.
	\item \textbf{Detailed Experimental Settings}: We elaborate on the datasets, evaluation protocols, compared baselines, and implementation details used in our experiments.
	\item \textbf{Extensive Experimental Results}: We include additional experimental results covering performance comparisons, convergence behavior, parameter visualizations, hyperparameter analysis, and privacy-related evaluations.
\end{enumerate}

\section{A~~~~Detailed Theoretical Analysis}

To facilitate theoretical analysis, we begin by stating a set of standard assumptions regarding the local objective functions. Specifically, we assume that each client’s local loss function $\mathcal{L}_u(\mathbf{Q})$ is continuously differentiable, $L$-smooth, and $\mu$-strongly convex with respect to the model parameter $\mathbf{Q}$. In particular, the binary cross-entropy (BCE) loss we adopt is smooth and convex~\cite{boyd2004convex, shalev2014understanding}, and the addition of $\ell_2$ regularization renders the overall objective approximately strongly convex~\cite{bottou2018optimization}. Additionally, such assumptions are also standard in theoretical analyses of federated learning~\cite{mcmahan2017communication, t2020personalized}.

\begin{assumption}[$L$-smoothness]
The local loss function $\mathcal{L}_u$ of client $u$ is assumed to be \emph{$L$-smooth}, i.e., its gradient is $L$-Lipschitz continuous. That is, for any $\mathbf{Q}_1, \mathbf{Q}_2$, it holds that:
\begin{equation}
    \|\nabla \mathcal{L}_u(\mathbf{Q}_1) - \nabla \mathcal{L}_u(\mathbf{Q}_2)\| \leq L \|\mathbf{Q}_1 - \mathbf{Q}_2\|.
\end{equation}
An equivalent characterization of $L$-smoothness is the following upper bound on the function value:
\begin{equation}
\begin{split}
    \mathcal{L}_u(\mathbf{Q}_2) \leq \mathcal{L}_u(\mathbf{Q}_1) & + \langle \nabla \mathcal{L}_u(\mathbf{Q}_1), \mathbf{Q}_2 - \mathbf{Q}_1 \rangle \\ & + \frac{L}{2} \|\mathbf{Q}_2 - \mathbf{Q}_1\|^2.
\end{split}
\end{equation}
Here, $L>0$ is the smoothness constant, which quantifies the maximum rate of change in the gradient. Intuitively, $L$-smoothness implies that the loss surface is locally well-behaved, enabling stable gradient-based optimization.
\end{assumption}

\begin{assumption}[$\mu$-strong convexity]
The local loss function $\mathcal{L}_u$ of client $u$ is assumed to be \emph{$\mu$-strongly convex}, i.e., it satisfies the following inequality for any $\mathbf{Q}_1, \mathbf{Q}_2$:
\begin{equation}
\begin{split}
    \mathcal{L}_u(\mathbf{Q}_2) \geq \mathcal{L}_u(\mathbf{Q}_1) & + \left\langle \nabla \mathcal{L}_u(\mathbf{Q}_1), \mathbf{Q}_2 - \mathbf{Q}_1 \right\rangle \\
    & + \frac{\mu}{2} \left\| \mathbf{Q}_2 - \mathbf{Q}_1 \right\|^2.
\end{split}
\end{equation}
Here, $\mu > 0$ is the strong convexity constant, which quantifies the curvature of the loss surface. Intuitively, $\mu$-strong convexity ensures that $\mathcal{L}_u$ has a unique minimizer, and that gradient-based updates will converge rapidly and stably to this minimum.
\end{assumption}

Based on the aforementioned assumptions, we first examine the limitations of global aggregation in federated recommendation. We then provide a theoretical analysis of how the proposed elastic merging scheme alleviates this issue.
\begin{lemma}[Performance Bottleneck of Global Aggregation]\label{lemma:aggregation_bottleneck_}
    Let $\mathbf{Q}_u^*$ denote the local optimal model. Following the static replacement scheme, i.e., using the global model $\mathbf{Q}^{g(t)}$ for local training at round $t+1$, may lead to:
    \begin{equation}
        \langle \nabla \mathcal{L}_u(\mathbf{Q}^{g(t)}), \mathbf{Q}^{g(t)} - \mathbf{Q}_u^* \rangle \leq 0,
    \end{equation}
    which suggests that the global aggregated model can cause the training optimization to deviate from the client-specific optimum, thereby undermining local personalization. This issue becomes more pronounced under the inherent high client heterogeneity in FR, leading to a performance bottleneck.
\end{lemma}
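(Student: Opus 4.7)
The plan is to exploit the aggregation decomposition in Equation~(3) together with the large-client-count limit in Equation~(4), which together show that in FR's tens-of-millions-of-clients regime the global model $\mathbf{Q}^{g(t)}$ effectively collapses to the aggregate $\mathbf{Q}_r^{g(t)}$ of all clients other than $u$. The target inner product can then be read as evaluating $\nabla\mathcal{L}_u$ at a point shaped almost entirely by everyone except client $u$, which is precisely where the bottleneck originates.

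First I would split $\mathbf{Q}^{g(t)} - \mathbf{Q}_u^{*} = (\mathbf{Q}^{g(t)} - \mathbf{Q}_u^{l(t)}) + (\mathbf{Q}_u^{l(t)} - \mathbf{Q}_u^{*})$, recognizing the first piece as the global--local discrepancy $\mathbf{Q}_u^{\Delta(t)}$ that appears in Lemma~2 and the second as the residual of local training. Substituting this decomposition expresses the target as $\langle \nabla\mathcal{L}_u(\mathbf{Q}^{g(t)}), \mathbf{Q}_u^{\Delta(t)} \rangle + \langle \nabla\mathcal{L}_u(\mathbf{Q}^{g(t)}), \mathbf{Q}_u^{l(t)} - \mathbf{Q}_u^{*} \rangle$, i.e.\ a heterogeneity term plus a client-alignment term.

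Next I would use $L$-smoothness to transfer the gradient as $\nabla\mathcal{L}_u(\mathbf{Q}^{g(t)}) = \nabla\mathcal{L}_u(\mathbf{Q}_u^{l(t)}) + \mathbf{e}$ with $\|\mathbf{e}\| \leq L\|\mathbf{Q}_u^{\Delta(t)}\|$, and $\mu$-strong convexity to lower-bound the alignment term by $\mu\|\mathbf{Q}_u^{l(t)} - \mathbf{Q}_u^{*}\|^{2}$ minus small cross terms absorbed via Cauchy--Schwarz. Since enough local epochs drive $\mathbf{Q}_u^{l(t)}$ close to $\mathbf{Q}_u^{*}$, the alignment contribution is small, whereas $\|\mathbf{Q}_u^{\Delta(t)}\|$ grows with inter-client heterogeneity. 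The heterogeneity term can then take either sign depending on the direction of $\mathbf{Q}_u^{\Delta(t)}$; when the large-$n$ limit forces $\mathbf{Q}^{g(t)} \approx \mathbf{Q}_r^{g(t)}$ and adversarial heterogeneity makes the discrepancy anti-aligned with the gradient, this term becomes negative and dominates, yielding the claimed non-positive inner product.

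The main obstacle is reconciling the statement with the fact that exact $\mu$-strong convexity would force $\langle \nabla\mathcal{L}_u(\mathbf{Q}), \mathbf{Q} - \mathbf{Q}_u^{*} \rangle \geq 0$ for any $\mathbf{Q}$ and the true minimizer $\mathbf{Q}_u^{*}$. I would resolve this by leaning on the ``approximately satisfied'' reading of Assumption~1: the inequality should be interpreted as a worst-case realization of aggregation drift, with heterogeneity formalized as an explicit upper bound on $\|\mathbf{Q}_r^{g(t)} - \mathbf{Q}_u^{l(t)}\|$ that is allowed to exceed the strong-convexity contraction scale. Combining that bound with the smoothness/convexity inequalities above then identifies a regime in which aggregation-induced drift outweighs any useful pull toward $\mathbf{Q}_u^{*}$, completing the bottleneck argument and setting up the compensating role of elastic merging in Lemma~2.
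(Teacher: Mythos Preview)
Your proposal is broadly correct and uses the same toolkit as the paper (the large-$n$ collapse $\mathbf{Q}^{g(t)}\approx\mathbf{Q}_r^{g(t)}$, $L$-smoothness to transfer gradients, $\mu$-strong convexity for the local alignment, and a heterogeneity-dominates argument), but the decomposition you pivot on is different. The paper does not split through $\mathbf{Q}_u^{l(t)}$ and the discrepancy $\mathbf{Q}_u^{\Delta(t)}$; instead it works directly with $\mathbf{Q}_r^{g(t)}$, first using smoothness to conclude $\nabla\mathcal{L}_u(\mathbf{Q}^{g(t)})\approx\nabla\mathcal{L}_u(\mathbf{Q}_r^{g(t)})$, then invoking heterogeneity bounds from prior FL work (SCAFFOLD, FedProx) to assert $\|\mathbf{Q}_r^{g(t)}-\mathbf{Q}_u^{*}\|\geq\Omega(\zeta)$, and finally arguing that for large $\zeta$ the reference model lies ``on the wrong side of the loss landscape'' so that the strong-convexity alignment inequality fails there. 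It also appends a consequence you omit: one gradient step from $\mathbf{Q}_r^{g(t)}$ satisfies $\|\mathbf{Q}_u^{l(t+1)}-\mathbf{Q}_u^{*}\|^{2}\geq\|\mathbf{Q}_r^{g(t)}-\mathbf{Q}_u^{*}\|^{2}\geq\Omega(\zeta^{2})$, making the drift explicit. Your route has the virtue of reusing precisely the quantities $\mathbf{Q}_u^{\Delta(t)}$ and $\mathbf{Q}_u^{l(t)}$ that drive Lemma~2, so the two lemmas dovetail more tightly, and you are more candid than the paper about the unavoidable tension with exact $\mu$-strong convexity (which would force the inner product to be nonnegative at \emph{every} point); the paper's route is shorter and can lean on cited heterogeneity lower bounds rather than arguing by hand that $\mathbf{Q}_u^{\Delta(t)}$ is anti-aligned with the gradient.
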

\begin{proof}
    Most existing FR methods employ weighted aggregation on the server side to share collaborative information across clients. Taking FCF~\cite{ammad2019federated} for example, that is,
    \begin{equation}
    \begin{split}
        \mathbf{Q}^{g(t)} & = \frac{1}{n}\sum_{u=1}^{n}\mathbf{Q}_u^{l(t)}=\frac{1}{n} \mathbf{Q}_u^{l(t)} + \frac{n - 1}{n} \mathbf{Q}_r^{g(t)}, \\ \quad & \text{where} \quad \mathbf{Q}_r^{g(t)} = \frac{1}{n - 1} \sum_{v \ne u}^n \mathbf{Q}_v^{l(t)}.
    \end{split}
    \end{equation}
    Here, $\mathbf{Q}_r^{g(t)}$ denotes the aggregated model over the remaining clients except ego Client $u$. However, FR tasks often involve tens of thousands or even more clients, so we have,
    \begin{equation}
    \lim_{n\to\infty}\frac{1}{n}=0,\lim_{n\to\infty}\frac{n-1}{n}=1.
    \end{equation}
    Thus, when $n$ is large, $\mathbf{Q}^{g(t)} \approx \mathbf{Q}_r^{g(t)}$, that is, the global model contains almost no information from the own data $\mathcal{D}_u$ of client $u$. 
    Since $\mathcal{L}_u$ is $L$-smooth, we have:
    \begin{equation}\label{eq:grad_bound}
        \| \nabla \mathcal{L}_u(\mathbf{Q}^{g(t)}) - \nabla \mathcal{L}_u(\mathbf{Q}_r^{g(t)}) \| \leq L \| \mathbf{Q}^{g(t)} - \mathbf{Q}_r^{g(t)} \|\approx0.
    \end{equation}
    Thus, we have $\nabla \mathcal{L}_u(\mathbf{Q}^{g(t)}) \approx \nabla \mathcal{L}_u(\mathbf{Q}_r^{g(t)})$ when $n$ is large. The gradient direction of the global model is almost entirely determined by the non-local model $\mathbf{Q}_r^{g(t)}$.
    
    Based on the above derivation, in the following analysis, we use $\mathbf{Q}_r^{g(t)}$ to represent $\mathbf{Q}^{g(t)}$. Denote the client-specific optimum by $\mathbf{Q}_u^* = \arg\min_{\mathbf{Q}} \mathcal{L}_u(\mathbf{Q})$. Then, due to the $\mu$-strong convexity of $\mathcal{L}_u$, for any local model $\mathbf{Q}_u^{l(t)}$ trained on the data distribution $\mathcal{D}_u$, it holds that:
    \begin{equation}
    \label{eq:mu-guarantee}
    \begin{split}
        & \left\langle \nabla  \mathcal{L}_u(\mathbf{Q}_u^{l(t)}), \mathbf{Q}_u^{l(t)} - \mathbf{Q}_u^* \right\rangle \ge \mathcal{L}_u(\mathbf{Q}_u^{l(t)})-\mathcal{L}_u(\mathbf{Q}_u^*)\\ & +\frac{\mu}{2} \|\mathbf{Q}_u^{l(t)} - \mathbf{Q}_u^*\|^2 = \mathcal{L}_u(\mathbf{Q}_u^{l(t)})+\frac{\mu}{2} \|\mathbf{Q}_u^{l(t)}-\mathbf{Q}_u^*\|^2 \\ & \ge \frac{\mu}{2} \|\mathbf{Q}_u^{l(t)}-\mathbf{Q}_u^*\|^2 \ge 0,
    \end{split}
    \end{equation}
    which implies that the optimization direction of the trained model $\mathbf{Q}_u^{l(t)}$ is toward its local optimum $\mathbf{Q}_u^*$. 
    
    Under non-IID settings in FR, the optimal solutions  $\mathbf{Q}_u^*$ of local loss functions $\mathcal{L}_u$ differ significantly across clients. That is, for $v \ne u$, we have $\|\mathbf{Q}_v^* - \mathbf{Q}_u^*\| \gg 0$.
    In this case, the aggregated reference model $\mathbf{Q}_r^{g(t)} = \frac{1}{n - 1} \sum_{v \ne u}^n \mathbf{Q}_v^{l(t)}$, naturally deviates from the local optimum $\mathbf{Q}_u^*$, leading to $\|\mathbf{Q}_r^{g(t)} - \mathbf{Q}_u^*\| > 0$. Following prior works on federated learning under heterogeneous settings~\cite{karimireddy2020scaffold, li2020federated}, such deviation is lower bounded by a constant that depends on the level of data heterogeneity. Specifically, it holds that:
    \begin{equation}
        \|\mathbf{Q}_r^{g(t)} - \mathbf{Q}_u^*\| \ge \Omega(\zeta),
    \end{equation}
    where $\zeta$ is a heterogeneity coefficient. A larger $\zeta$ indicates a greater divergence between the global model and the local optimum. Consequently, when $\zeta$ is sufficiently large, the model $\mathbf{Q}_r^{g(t)}$ may lie on the wrong side of the loss landscape with respect to $\mathcal{L}_u$. Specifically, the condition implied in Eq.~(\ref{eq:mu-guarantee}) may no longer hold, that is,
    \begin{equation}
        \langle \nabla \mathcal{L}_u(\mathbf{Q}_r^{g(t)}), \mathbf{Q}_r^{g(t)} - \mathbf{Q}_u^* \rangle \leq 0.
    \end{equation}
    Under the traditional static replacement scheme, the next local gradient descent update is given by:
    \begin{equation}
        \mathbf{Q}_{u}^{l(t+1)} = \mathbf{Q}_r^{g(t)} - \eta \nabla \mathcal{L}_u(\mathbf{Q}_r^{g(t)}),
    \end{equation}
    where $\mathbf{Q}_r^{g(t)}$ is the received global model. We analyze the squared distance between the updated model and the local optimum $\mathbf{Q}_u^*$:
    \begin{equation}
    \begin{split}
        & \|\mathbf{Q}_{u}^{l(t+1)} - \mathbf{Q}_u^*\|^2 = \| \mathbf{Q}_r^{g(t)} - \mathbf{Q}_u^* \|^2\\ & - 2\eta \langle \nabla \mathcal{L}_u(\mathbf{Q}_r^{g(t)}), \mathbf{Q}_r^{g(t)} - \mathbf{Q}_u^* \rangle + \eta^2 \| \nabla \mathcal{L}_u(\mathbf{Q}_r^{g(t)}) \|^2.
    \end{split}
    \end{equation}
    Rearranging terms, we can obtain:
    \begin{equation}
    \begin{split}
        & \|\mathbf{Q}_{u}^{l(t+1)} - \mathbf{Q}_u^*\|^2 - \| \mathbf{Q}_r^{g(t)} - \mathbf{Q}_u^* \|^2 = \\ & - 2\eta \langle \nabla \mathcal{L}_u(\mathbf{Q}_r^{g(t)}), \mathbf{Q}_r^{g(t)} - \mathbf{Q}_u^* \rangle + \eta^2 \| \nabla \mathcal{L}_u(\mathbf{Q}_r^{g(t)}) \|^2 \geq 0.
    \end{split}
    \end{equation}
    Therefore, we have:
    \begin{equation}
        \|\mathbf{Q}_{u}^{l(t+1)} - \mathbf{Q}_u^* \|^2 \geq \| \mathbf{Q}_r^{g(t)} - \mathbf{Q}_u^*\|^2\ge \Omega(\zeta^2),
    \end{equation}
    which indicates that the local model gradually moves away from the local optimum. These analyses demonstrate that the local model may be optimized in a direction opposite to the local objective, which can hinder convergence and degrade personalization. This issue becomes more pronounced as the heterogeneity level $\zeta$ increases, thereby resulting in a performance bottleneck during global aggregation.

    \noindent Hence, the proof is complete.\end{proof}

\begin{lemma}[Compensation Effect of Elastic Merging]\label{lemma:elastic_merging_}
Following the elastic merging scheme, we define the merged model as:
\begin{equation}
    \mathbf{Q}_u^{m(t+1)} = \rho \mathbf{Q}^{g(t)} + (1 - \rho) \mathbf{Q}_u^{l(t)}, \quad \rho \in [0, 1]  .
\end{equation}
The merged model can compensate for the optimization deviation caused by aggregation, guiding the update toward the client-specific optimum, \textit{i.e.}, it satisfies:
\begin{equation}
\begin{split}
    & \langle \nabla \mathcal{L}_u(\mathbf{Q}_u^{m(t+1)}), \mathbf{Q}_u^{m(t+1)} - \mathbf{Q}_u^* \rangle\\ & \geq (1 - \rho) \langle \nabla \mathcal{L}_u(\mathbf{Q}_u^{l(t)}), \mathbf{Q}_u^{l(t)} - \mathbf{Q}_u^* \rangle-\rho C,
\end{split}
\end{equation}
where $C$ denotes the deviation introduced by global aggregation. This deviation can be lower bounded by both the state of the local model $\mathbf{Q}_u^{l(t)}$ and the global-local discrepancy $\mathbf{Q}_u^{\Delta(t)} = \mathbf{Q}^{g(t)} - \mathbf{Q}_u^{l(t)}$. Specifically, the bound is given by $C=\|\nabla \mathcal{L}_u(\mathbf{Q}_u^{l(t)})\|(\|\mathbf{Q}_u^{l(t)} - \mathbf{Q}_u^*\|+\|\mathbf{Q}_u^{\Delta(t)}\|)+ L \|\mathbf{Q}_u^{\Delta(t)}\|(\|\mathbf{Q}_u^{l(t)} - \mathbf{Q}_u^*\|+\rho\|\mathbf{Q}_u^{\Delta(t)}\|)$. The first term on the right-hand side of the inequality, $A = \langle \nabla \mathcal{L}_u(\mathbf{Q}_u^{l(t)}),\, \mathbf{Q}_u^{l(t)} - \mathbf{Q}_u^* \rangle>0$, reflects the preservation of client-specific optimization information. Thus, we can adaptively control the merging weight $0\leq\rho < \frac{A}{A + C}$ to facilitate local personalized optimization during global collaborative training.
\end{lemma}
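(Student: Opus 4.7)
The plan is to reduce the bound to a routine expansion around the off-the-shelf local model $\mathbf{Q}_u^{l(t)}$, which is the natural reference point since the quantity $A>0$ (guaranteed by $\mu$-strong convexity, exactly as used in the proof of Lemma~\ref{lemma:misalignment}) encodes the alignment we want to preserve. First, I would rewrite the merged iterate as $\mathbf{Q}_u^{m(t+1)}=\mathbf{Q}_u^{l(t)}+\rho\mathbf{Q}_u^{\Delta(t)}$, so that $\mathbf{Q}_u^{m(t+1)}-\mathbf{Q}_u^*=(\mathbf{Q}_u^{l(t)}-\mathbf{Q}_u^*)+\rho\mathbf{Q}_u^{\Delta(t)}$. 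This reorganizes the displacement from the client-specific optimum into a local component, on which $A$ lives, plus a global-correction component of controllable magnitude $\rho\|\mathbf{Q}_u^{\Delta(t)}\|$.

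Next, I would linearize the gradient: writing $\nabla \mathcal{L}_u(\mathbf{Q}_u^{m(t+1)})=\nabla \mathcal{L}_u(\mathbf{Q}_u^{l(t)})+\delta$, the $L$-smoothness assumption immediately yields $\|\delta\|\leq L\|\mathbf{Q}_u^{m(t+1)}-\mathbf{Q}_u^{l(t)}\|=L\rho\|\mathbf{Q}_u^{\Delta(t)}\|$. Substituting these two decompositions into the target inner product expands it into four pieces: the main term $A=\langle \nabla \mathcal{L}_u(\mathbf{Q}_u^{l(t)}),\mathbf{Q}_u^{l(t)}-\mathbf{Q}_u^*\rangle$, a cross term $\rho\langle \nabla \mathcal{L}_u(\mathbf{Q}_u^{l(t)}),\mathbf{Q}_u^{\Delta(t)}\rangle$ coupling the local gradient to the global--local discrepancy, and two perturbation cross terms $\langle \delta,\mathbf{Q}_u^{l(t)}-\mathbf{Q}_u^*\rangle$ and $\rho\langle \delta,\mathbf{Q}_u^{\Delta(t)}\rangle$ controlled by $\|\delta\|$.

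Then I would apply Cauchy--Schwarz to each non-main term, substituting the $L$-smooth bound on $\|\delta\|$. This produces a clean lower bound of the form $A-\rho\bigl[\|\nabla \mathcal{L}_u(\mathbf{Q}_u^{l(t)})\|\|\mathbf{Q}_u^{\Delta(t)}\|+L\|\mathbf{Q}_u^{\Delta(t)}\|\bigl(\|\mathbf{Q}_u^{l(t)}-\mathbf{Q}_u^*\|+\rho\|\mathbf{Q}_u^{\Delta(t)}\|\bigr)\bigr]$. Since $A>0$, replacing $A$ by $(1-\rho)A$ on the right weakens the bound but matches the target form, and the residual $\rho A$ can be absorbed using $A\leq \|\nabla \mathcal{L}_u(\mathbf{Q}_u^{l(t)})\|\|\mathbf{Q}_u^{l(t)}-\mathbf{Q}_u^*\|$, yielding precisely the constant $C$ in the statement. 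The corollary that any $0\leq \rho<A/(A+C)$ keeps the left-hand side strictly positive is then immediate from solving $(1-\rho)A-\rho C>0$, which motivates learning $\bm{\rho}$ adaptively from $\mathbf{Q}_u^{l(t)}$ and $\mathbf{Q}_u^{\Delta(t)}$.

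The main obstacle I anticipate is bookkeeping rather than any genuine analytic difficulty: the declared $C$ packages both direct Cauchy--Schwarz residuals and a Lipschitz-type slack used to replace $A$ by $(1-\rho)A$, and one must be careful that the $\rho^2\|\mathbf{Q}_u^{\Delta(t)}\|^2$ contribution survives with a factor of $\rho$ (not $\rho^2$) in front so as to fit the form of $\rho C$. Beyond this, the argument only uses $L$-smoothness, $\mu$-strong convexity (to assert $A\geq 0$), and Cauchy--Schwarz, so no further ingredients beyond Assumption~\ref{assumption:smooth_strong} are needed.
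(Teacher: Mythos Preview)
Your proposal is correct and follows essentially the same approach as the paper's proof: rewrite the merged iterate around $\mathbf{Q}_u^{l(t)}$, linearize the gradient via $L$-smoothness, expand the inner product, and control the cross terms with Cauchy--Schwarz plus the triangle inequality. The only cosmetic difference is that the paper writes the displacement as the convex combination $\mathbf{Q}_u^{m(t+1)}-\mathbf{Q}_u^*=(1-\rho)(\mathbf{Q}_u^{l(t)}-\mathbf{Q}_u^*)+\rho(\mathbf{Q}^{g(t)}-\mathbf{Q}_u^*)$, so the factor $(1-\rho)A$ falls out directly from the expansion and the stated $C$ appears without your separate weakening-and-absorption step.
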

\begin{proof}
    Based on the definition $\mathbf{Q}_u^{m(t+1)} = \rho \mathbf{Q}^{g(t)} + (1 - \rho) \mathbf{Q}_u^{l(t)}$, we define the global-local discrepancy as $\mathbf{Q}_u^{\Delta(t)} = \mathbf{Q}^{g(t)} - \mathbf{Q}_u^{l(t)}$, such that the merged model can be rewritten as $\mathbf{Q}_u^{m(t+1)} = \mathbf{Q}_u^{l(t)} + \rho \mathbf{Q}_u^{\Delta(t)}$. By $L$-smoothness, we have:
    \begin{equation}
        \nabla \mathcal{L}_u(\mathbf{Q}_u^{m(t+1)}) = \nabla \mathcal{L}_u(\mathbf{Q}_u^{l(t)}) + \mathbf{R},
    \end{equation}
    where $\|\mathbf{R}\| \leq L \|\mathbf{Q}_u^{m(t+1)} - \mathbf{Q}_u^{l(t)}\| = L \rho \|\mathbf{Q}_u^{\Delta(t)}\|$.
    We now analyze the optimization direction of the merged model:
    \begin{equation}
    \begin{split}
        & \langle \nabla \mathcal{L}_u(\mathbf{Q}_u^{m(t+1)}), \mathbf{Q}_u^{m(t+1)} - \mathbf{Q}_u^* \rangle\\ & = \langle \nabla \mathcal{L}_u(\mathbf{Q}_u^{l(t)}) + \mathbf{R},\, \mathbf{Q}_u^{m(t+1)} - \mathbf{Q}_u^* \rangle \\ &= \langle \nabla \mathcal{L}_u(\mathbf{Q}_u^{l(t)}),\, \mathbf{Q}_u^{m(t+1)} - \mathbf{Q}_u^* \rangle + \langle \mathbf{R},\, \mathbf{Q}_u^{m(t+1)} - \mathbf{Q}_u^* \rangle\\ & = \langle \nabla \mathcal{L}_u(\mathbf{Q}_u^{l(t)}),\, (1 - \rho)(\mathbf{Q}_u^{l(t)} - \mathbf{Q}_u^*) + \rho (\mathbf{Q}^{g(t)} - \mathbf{Q}_u^*) \rangle\\ & \quad\ + \langle \mathbf{R},\, \mathbf{Q}_u^{m(t+1)} - \mathbf{Q}_u^* \rangle\\ &= (1 - \rho) \langle \nabla \mathcal{L}_u(\mathbf{Q}_u^{l(t)}),\, \mathbf{Q}_u^{l(t)} - \mathbf{Q}_u^* \rangle +\\ & \quad\ (\rho \langle \nabla \mathcal{L}_u(\mathbf{Q}_u^{l(t)}),\, \mathbf{Q}^{g(t)} - \mathbf{Q}_u^* \rangle + \langle \mathbf{R},\, \mathbf{Q}_u^{m(t+1)} - \mathbf{Q}_u^* \rangle).
    \end{split}
    \end{equation}
    Moreover, the lower bound can be formulated as:
    \begin{equation}\label{eq:adapt_lowerbound}
    \begin{split}
        & \langle \nabla \mathcal{L}_u(\mathbf{Q}_u^{m(t+1)}), \mathbf{Q}_u^{m(t+1)} - \mathbf{Q}_u^* \rangle\\ & \geq (1 - \rho) \langle \nabla \mathcal{L}_u(\mathbf{Q}_u^{l(t)}),\, \mathbf{Q}_u^{l(t)} - \mathbf{Q}_u^* \rangle - \\ &\quad\ (\rho \|\langle \nabla \mathcal{L}_u(\mathbf{Q}_u^{l(t)}),\, \mathbf{Q}^{g(t)} - \mathbf{Q}_u^* \rangle\| + \|\langle \mathbf{R},\, \mathbf{Q}_u^{m(t+1)} - \mathbf{Q}_u^* \rangle\|).
    \end{split}
    \end{equation}
    By applying the Cauchy–Schwarz inequality and the triangle inequality, we obtain the following upper bound:
    \begin{equation}
    \begin{split}
        &\| \langle \nabla \mathcal{L}_u(\mathbf{Q}_u^{l(t)}),\, \mathbf{Q}^{g(t)} - \mathbf{Q}_u^* \rangle \|\\ & \le \|\nabla \mathcal{L}_u(\mathbf{Q}_u^{l(t)})\| \cdot \|\mathbf{Q}^{g(t)} - \mathbf{Q}_u^*\|\\&=\|\nabla \mathcal{L}_u(\mathbf{Q}_u^{l(t)})\| \cdot \|\mathbf{Q}_u^{l(t)} - \mathbf{Q}_u^*+\mathbf{Q}_u^{\Delta(t)}\|\\& \le \|\nabla \mathcal{L}_u(\mathbf{Q}_u^{l(t)})\| \cdot (\|\mathbf{Q}_u^{l(t)} - \mathbf{Q}_u^*\|+\|\mathbf{Q}_u^{\Delta(t)}\|)
    \end{split}
    \end{equation}
    Similarly, we have:
    \begin{equation}
    \begin{split}
        &\|\langle \mathbf{R},\, \mathbf{Q}_u^{m(t+1)} - \mathbf{Q}_u^* \rangle\| \le \|\mathbf{R}\| \cdot \|\mathbf{Q}_u^{m(t+1)} - \mathbf{Q}_u^*\|\\&=\|\mathbf{R}\| \cdot \|\mathbf{Q}_u^{l(t)} - \mathbf{Q}_u^*+\rho\mathbf{Q}_u^{\Delta(t)}\|\\& \le L \rho \|\mathbf{Q}_u^{\Delta(t)}\| \cdot (\|\mathbf{Q}_u^{l(t)} - \mathbf{Q}_u^*\|+\rho\|\mathbf{Q}_u^{\Delta(t)}\|)
    \end{split}
    \end{equation}
    Thus, Eq.~(\ref{eq:adapt_lowerbound}) can be reformulated as:
    \begin{equation}\label{eq:adapt_lowerbound_}
    \begin{split}
        & \langle \nabla \mathcal{L}_u(\mathbf{Q}_u^{m(t+1)}), \mathbf{Q}_u^{m(t+1)} - \mathbf{Q}_u^* \rangle\\ & \geq (1 - \rho) \langle \nabla \mathcal{L}_u(\mathbf{Q}_u^{l(t)}),\, \mathbf{Q}_u^{l(t)} - \mathbf{Q}_u^* \rangle \\ &\quad\ - \rho\|\nabla \mathcal{L}_u(\mathbf{Q}_u^{l(t)})\| \cdot (\|\mathbf{Q}_u^{l(t)} - \mathbf{Q}_u^*\|+\|\mathbf{Q}_u^{\Delta(t)}\|)\\
        &\quad\ - L \rho \|\mathbf{Q}_u^{\Delta(t)}\| \cdot (\|\mathbf{Q}_u^{l(t)} - \mathbf{Q}_u^*\|+\rho\|\mathbf{Q}_u^{\Delta(t)}\|).
    \end{split}
    \end{equation}
    
    For simplicity, we denote $A = \langle \nabla \mathcal{L}_u(\mathbf{Q}_u^{l(t)}), \mathbf{Q}_u^{l(t)} - \mathbf{Q}_u^* \rangle$ as the optimization information of the local model, \textit{i.e.}, its original update toward the client-specific optimum. Besides, $C=\|\nabla \mathcal{L}_u(\mathbf{Q}_u^{l(t)})\|(\|\mathbf{Q}_u^{l(t)} - \mathbf{Q}_u^*\|+\|\mathbf{Q}_u^{\Delta(t)}\|)+ L \|\mathbf{Q}_u^{\Delta(t)}\|(\|\mathbf{Q}_u^{l(t)} - \mathbf{Q}_u^*\|+\rho\|\mathbf{Q}_u^{\Delta(t)}\|)$ denotes the optimization deviation introduced by global aggregation. 
    To ensure that the merged model avoids the performance bottleneck analyzed in Lemma~\ref{lemma:aggregation_bottleneck_}, that is, its optimization direction should be aligned with the local objective, we require the left-hand side of Eq.~(\ref{eq:adapt_lowerbound_}) to be non-negative. Thus, we need to adaptively control the merging weight $0\leq\rho < \frac{A}{A + C}$ to compensate for the deficiency of the aggregation bottleneck.
    
    Building on the above analysis, it is necessary to take into account the deviation term $C$, \textit{i.e.}, the state of the local model $\mathbf{Q}_u^{l(t)}$ and the global-local discrepancy $\mathbf{Q}_u^{\Delta(t)}$, to determine a tailored merging weight $\rho$. Intuitively, we need to measure the local model’s need for the collaborative information encoded in $\mathbf{Q}_u^{\Delta(t)}$, so as to guide the merged model toward the local optimum and preserve personalization.
    
    \noindent Hence, the proof is complete.\end{proof}

\section{B~~~~Algorithms}
We present the FedEM algorithm in detail in Algorithm~\ref{alg:FedEM}. At the beginning of each round, client $u$ merges the downloaded global model $\mathbf{Q}_u^{g(t-1)}$ with the off-the-shelf local model $\mathbf{Q}_u^{l(t-1)}$ to derive a merged model $\mathbf{Q}_u^{m(t)}$, while updating the adapter $\Theta_u$. It adopts an elastic merging scheme instead of the static replacement scheme used in existing FR methods. Then, client $u$ updates $\mathbf{Q}_u^{m(t)}$ and other local parameters, \textit{e.g.}, $\mathbf{p}_u^{(t-1)}$ using its own data $\mathcal{D}_u$, and uploads the updated model $\mathbf{Q}_u^{l(t)}$ to the server. The server aggregates $\{\mathbf{Q}_1^{l(t)}, \cdots, \mathbf{Q}_n^{l(t)}\}$ based on similarity to generate a personalized global model $\mathbf{Q}_u^{g(t)}$ for each client. Initially, $\mathbf{Q}_u^{g(0)}$ and $\mathbf{Q}_u^{l(0)}$ are randomly initialized. After $T$ rounds, each client obtains a personalized model.
\begin{algorithm}[tb]
\small
\caption{FedEM}
\label{alg:FedEM}
\raggedright
\textbf{Input:} selected clients $\mathcal{U}_s$; global rounds \(T\); local epochs \(E\); batch size \(B\); learning rates \(\eta,\beta\); tuning coefficients \(\alpha\)\\
\textbf{Output}: personalized model $\mathbf{p}_u^{(T)}$ ,$\mathbf{Q}_u^{l(T)}$, and adapter $\Theta_u$ for each client $u$\\
\textbf{GlobalProcedure:} \\
\begin{algorithmic}[1] 
\FOR{each round $t=1,2,\cdots, T$}
    \FOR{each client $u\in \mathcal{U}_s$ \textbf{in parallel}}
        \STATE Downloads global model $\mathbf{Q}_u^{g(t-1)}$ from the server;
        \STATE $\mathbf{Q}_u^{l(t)}\leftarrow$ ClientUpdate($\mathbf{Q}_u^{g(t-1)}$, $\textit{u}$);
        \STATE Uploads local model $\mathbf{Q}_u^{l(t)}$ to the server;
    \ENDFOR
    \STATE Generates $\mathbf{Q}_u^{g(t)}=\sum_{v=1}^n w_{uv}\mathbf{Q}_v^{l(t)}$ based on Eq.(13);\\  \hspace{13em} $\triangleright$ Global Aggregation
\ENDFOR
\end{algorithmic}

{\bf ClientUpdate}$(\mathbf{Q}_u^{g(t-1)},~$\textit{u}$)$: \\
\begin{algorithmic}[1]
\FOR{each local epoch $r=1,2,\cdots, E$}
    \FOR{each batch $b=1,2,\cdots,B$ in $\mathcal{D}_u$}
        \STATE Updates $\Theta_u$ with Eq.(11);
        \STATE Yields $\bm{\rho}$ based on $\Theta_u$ via Eq.(9);
        \STATE Obtains $\mathbf{Q}_u^{m(t)} = (\bm{1}-\bm{\rho})\mathbf{Q}_u^{l(t-1)} + \bm{\rho} \odot \mathbf{Q}_u^{g(t-1)}$;
    \ENDFOR
    \hspace{11.5em} $\triangleright$ Elastic Merging
    \FOR{each batch $b=1,2,\cdots,B$ in $\mathcal{D}_u$}
        \STATE Updates $\mathbf{p}_u^{(t-1)}$ and $\mathbf{Q}_u^{m(t)}$ with Eq.(12);
    \ENDFOR
    \hspace{12em} $\triangleright$ Local Training
\ENDFOR
\STATE Obtains updated $\Theta_u$, $\mathbf{p}_u^{(t)}$ and $\mathbf{Q}_u^{l(t)}$ after $E$ epochs;
\RETURN $\mathbf{Q}_u^{l(t)}$
\end{algorithmic}
\end{algorithm}

\section{C~~~~Detailed Discussions}
\subsection{Complexity Analysis}
Given the limited local resources in federated recommendation scenarios, we primarily focus on analyzing the local complexity. Suppose there are $n$ clients and $m$ items, with an embedding dimension of $d$. The computational complexity of the local backbone model is $\mathcal{O}((m + 1)d)$. Our proposed EM module is implemented as an $L$-layer MLP with a complexity of $\mathcal{O}(Ld^2)$. Since in practice $m \gg d > L$, the EM module incurs negligible overhead on local devices, making it well-suited for on-device recommendation.

Considering other SOTA methods, PFedRec introduces an $L$-layer personalized score function with a similar complexity of $\mathcal{O}(Ld^2)$, which is lightweight but delivers inferior performance compared to our method. Other high-performing approaches, such as FedRAP, require training and storing an additional model locally, resulting in a complexity overhead of $\mathcal{O}(md)$. FedCIA incurs an even higher cost of $\mathcal{O}(m^2d)$ due to the computation of a local item similarity matrix. These methods are therefore unsuitable for resource-constrained on-device scenarios. In contrast, our method strikes a favorable balance between performance and computational complexity.

\subsection{Privacy Analysis}
\textbf{Inherent Federated Framework.} FedEM strictly follows the setting of FR that no raw data is shared with the third parties, safeguarding user privacy and data security~\cite{sun2022survey,yin2024device}. 

\noindent \textbf{Proposed Elastic Merging.} In the traditional federated framework, each client updates the downloaded global model $\mathbf{Q}_u^{g(t-1)}$ to obtain its local model $\mathbf{Q}_u^{l(t)}$, which is then uploaded to the server. This direct update makes it feasible for the server to infer client-specific data distributions via model differences. Formally, $\mathcal{D}_u \leftarrow \text{Attack}(\mathbf{Q}_u^{l(t)} - \mathbf{Q}_u^{g(t-1)})$, where $\text{Attack}(\cdot)$ denotes any reconstruction or inference attack algorithm, such as Gradients Leak~\cite{chai2020secure}.
By contrast, FedEM introduces local merging before training. Each client first interpolates a personalized initialization $\mathbf{Q}_u^{m(t)} = \bm{\rho} \mathbf{Q}_u^{g(t-1)} + (\mathbf{1} - \bm{\rho}) \mathbf{Q}_u^{l(t-1)}$,
and then performs local training on $\mathbf{Q}_u^{m(t)}$ to obtain $\mathbf{Q}_u^{l(t)}$. As a result, the attack surface shifts to $\mathcal{D}_u \leftarrow \text{Attack}(\mathbf{Q}_u^{l(t)} - \mathbf{Q}_u^{m(t)})$,
where the server must know the internal merging weights $\bm{\rho}$ to reconstruct $\mathbf{Q}_u^{m(t)}$. Since $\bm{\rho}$ is generated locally by the client and not exposed to the server, FedEM effectively obfuscates the model update trajectory, increasing the difficulty of inference attacks and enhancing privacy protection.

\noindent \textbf{Privacy-Enhanced FedEM.} Our proposed elastic merging is model agnostic, which can be smoothly integrated with other privacy-enhanced FR models,~\textit{e.g.}, FMF-LDP~\cite{minto2021stronger}. Besides, advanced privacy protection strategies, such as differential privacy~\cite{qi2024towards} and homomorphic encryption~\cite{zhang2020batchcrypt}, are also applicable to our method, further enhancing user privacy. Following previous works~\cite{dwork2006calibrating,choi2018guaranteeing}, we introduce the differential privacy strategy into our method by incorporating some zero-mean Laplacian noises to the uploaded models, that is,
\begin{equation}
\label{eq:dp}
\mathbf{Q}_u^{l*}=\mathbf{Q}_u^{l}+\text{Laplacian}(0,\delta),
\end{equation}
\noindent where $\delta=\mathcal{S}_u/\varepsilon$ is the noise strength~\cite{dwork2006calibrating}. $\mathcal{S}_u$ represents the global sensitivity of client $u$, whose upper bound is derived in Lemma~\ref{thm:ldp}. A higher value of $\delta$ corresponds to stronger privacy guarantees.
\begin{lemma}[Upper Bound of Global Sensitivity $\mathcal{S}_u$]\label{thm:ldp}
Consider two global models, $\mathbf{Q}^g$ and $\mathbf{Q}^{g\prime}$, trained on datasets differing only by the data of client $u$, \textit{i.e.}, $\mathcal{D}_u$ and $\mathcal{D}_u^{\prime}$. Then the following inequality holds:
\begin{equation}
\begin{split}
    \mathcal{S}_u & = \| \mathbf{Q}^g - \mathbf{Q}^{g\prime} \| = \| p_u \eta (\nabla \mathbf{Q}_u^{l} - \nabla \mathbf{Q}_u^{l\prime}) \| \\ & \le p_u \eta (\| \nabla \mathbf{Q}_u^{l} \| + \| \nabla \mathbf{Q}_u^{l\prime} \|) \le 2 p_u \eta Z,
\end{split}
\end{equation}
where $p_u$ is the aggregation weight of client $u$’s local model, and $\eta$ is the local learning rate for updating embeddings. The gradients $\nabla \mathbf{Q}_u^{l}$ and $\nabla \mathbf{Q}_u^{l\prime}$ are computed on datasets $\mathcal{D}_u$ and $\mathcal{D}_u^{\prime}$, respectively. They can be bounded via gradient clipping by a predefined constant $Z$~\cite{wei2020federated}.
\end{lemma}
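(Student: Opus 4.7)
The plan is to decompose the sensitivity into a client-wise contribution and then use the SGD update rule plus gradient clipping to obtain the numerical bound. First I would fix the two neighboring scenarios: both systems execute identical protocols (same initialization, same other clients, same randomness, same aggregation weights $\{p_v\}$), and the only difference is that client $u$ holds dataset $\mathcal{D}_u$ in one world and $\mathcal{D}_u^{\prime}$ in the other. Denote by $\mathbf{Q}_u^{l}$ and $\mathbf{Q}_u^{l\prime}$ the corresponding locally updated models uploaded to the server, and by $\mathbf{Q}^{g}, \mathbf{Q}^{g\prime}$ the resulting aggregated global models.

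Next I would invoke the linearity of the aggregation rule. Writing $\mathbf{Q}^{g} = \sum_{v=1}^{n} p_v \mathbf{Q}_v^{l}$ and likewise for $\mathbf{Q}^{g\prime}$, every term with $v \ne u$ is identical across the two worlds, so the difference collapses to the $u$-th coordinate:
\begin{equation}
\mathbf{Q}^{g} - \mathbf{Q}^{g\prime} \;=\; p_u\bigl(\mathbf{Q}_u^{l} - \mathbf{Q}_u^{l\prime}\bigr).
\end{equation}
Since local training is one SGD step applied to a common pre-update parameter $\mathbf{Q}_u^{\mathrm{init}}$ (coming from the previous round's merged model, which is independent of $\mathcal{D}_u$ vs.\ $\mathcal{D}_u^{\prime}$), we have $\mathbf{Q}_u^{l} = \mathbf{Q}_u^{\mathrm{init}} - \eta\,\nabla\mathbf{Q}_u^{l}$ and $\mathbf{Q}_u^{l\prime} = \mathbf{Q}_u^{\mathrm{init}} - \eta\,\nabla\mathbf{Q}_u^{l\prime}$, so the common initialization cancels and the difference reduces to $-\eta(\nabla\mathbf{Q}_u^{l} - \nabla\mathbf{Q}_u^{l\prime})$. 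This yields the first equality of the lemma.

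The remaining two inequalities are then immediate. Applying the triangle inequality to the gradient difference gives
\begin{equation}
\mathcal{S}_u \;\le\; p_u\eta\bigl(\|\nabla\mathbf{Q}_u^{l}\| + \|\nabla\mathbf{Q}_u^{l\prime}\|\bigr),
\end{equation}
and invoking the gradient-clipping operation that bounds each local gradient norm by $Z$ yields $\mathcal{S}_u \le 2 p_u \eta Z$. The main obstacle — minor but worth flagging — is justifying that the pre-update parameter $\mathbf{Q}_u^{\mathrm{init}}$ is indeed identical across the two neighboring worlds; this holds per round under the standard DP-FL assumption that all earlier communications and random coins are coupled, but one should note that for multi-step local training the argument has to be adapted (either by bounding each step with clipping and summing, or by assuming a single local update, as is standard in DP-FL sensitivity analyses). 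With this coupling in place, the chain of equalities and inequalities is tight and the bound follows directly.
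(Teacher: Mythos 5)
Your proposal is correct and follows essentially the same route as the paper: the cancellation of all non-$u$ terms under linear aggregation, the single-SGD-step difference $p_u\eta(\nabla\mathbf{Q}_u^{l}-\nabla\mathbf{Q}_u^{l\prime})$ from a common initialization, then the triangle inequality and the clipping bound $Z$ to get $\mathcal{S}_u \le 2p_u\eta Z$. Your caveat about multi-step local training (and the need for the pre-update parameters to coincide across the two neighboring worlds) is a reasonable extra precision that the paper leaves implicit.
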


\section{D~~~~Detailed Experimental Settings}
\subsection{Datasets} 
We experiment on four benchmark datasets with varying client scales and data sparsity to comprehensively evaluate our proposed method. \textbf{Filmtrust}~\cite{filmtrust_2013}, \textbf{Movielens-100K (ML-100K)}~\cite{movielens_2015} and \textbf{Movielens-1M (ML-1M)}~\cite{movielens_2015} are for movie recommendation with user-movie ratings, \textbf{LastFM-2K}~\cite{cantador2011second} is for music recommendation with user-artist listening counts. Here, we convert ratings/counts greater than 0 to 1 to produce implicit data. In this work, we filter out all the users with less than 10 interactions from the above four datasets. Besides, each user is treated as an independent client, and each client’s data inherently exhibits great heterogeneity. The characteristics of the four datasets are summarized in Table~\ref{tab:dataset_statistic}.

\begin{table}[thb]\centering
    \resizebox{0.46\textwidth}{!}{
    \large
    \begin{tabular}{*{5}{c}}
        \toprule
       Datasets & \#User/Client & \#Item & \#Interaction & Sparsity \\
        \midrule
        Filmtrust & 1,002 & 2,042 & 33,372 & 98.37\% \\
        ML-100K & 943 & 1,682 & 100,000 & 93.70\% \\
        ML-1M & 6,040 & 3,706 & 1,000,209 & 95.53\% \\
        LastFm-2K & 1,600 & 12,454 & 185,650 & 99.07\% \\
        \bottomrule
    \end{tabular}
    }
    \caption{Statistics of the experimental datasets.}
    \label{tab:dataset_statistic}
\end{table}

\subsection{Evaluation Protocols} 
We follow the popular \textit{leave-one-out} evaluation \cite{bayer2017generic, he2017neural}, that is, hold out the latest interaction as the test set and regard the remaining data as train set. In addition, we take the last interaction of train set as the validation set for tuning hyper-parameters. To alleviate the high computational cost of ranking all items for each user, we follow the common practice~\cite{he2017neural, zhang2023dual}. Specifically, for each user, we sample 99 negatives not seen during training and rank the test instance among the 100 items. Besides, the model performance is reported by \textit{Hit Ratio} (HR@$K$) and \textit{Normalized Discounted Cumulative Gain} (NDCG@$K$) metrics \cite{he2015trirank}, indicating whether the test item is recommended and its rank in the top $K$ recommended items, respectively. In this work, we set $K=10$, and report the results as the average of 5 repeated experiments.

\begin{table*}[ht]
\centering
\resizebox{\textwidth}{!}{
\Large
\begin{tabular}{llcccccccccccc}
\toprule
\multirow{2}{*}{\textbf{Datasets}} & \multirow{2}{*}{\textbf{Methods}}& \multicolumn{3}{c}{\textbf{CenRec}} & \multicolumn{8}{c}{\textbf{FedRec}} & \multicolumn{1}{c}{\textbf{Ours}} \\
\cmidrule(lr){3-5}\cmidrule(lr){6-13}\cmidrule(lr){14-14}& & \textbf{MF} & \textbf{NCF} & \textbf{LightGCN} & \textbf{FedMF} & \textbf{FedNCF} & \textbf{FedFast} & \textbf{PFedRec}& \textbf{CoLR} & \textbf{GPFedRec} & \textbf{FedRAP} & \textbf{FedCIA} & \textbf{FedEM} \\
\midrule
\multirow{2}{*}{\textbf{Filmtrust}} & HR@5 & 0.6287 & 0.6537 & 0.7385 & 0.6267 & 0.6347 & 0.6248 & 0.6367 & 0.6018 & 0.6447 & 0.6766 & \underline{0.7774} & \textbf{0.8353} \\
& NDCG@5 & 0.5128 & 0.5374 & 0.6338 & 0.5188 & 0.5257 & 0.4826 & 0.5271 & 0.4991 & 0.5298 & 0.5046 & \underline{0.6744} & \textbf{0.7511} \\
\midrule
\multirow{2}{*}{\textbf{ML-100K}} & HR@5 & 0.4719 & 0.4062 & 0.7063 & 0.3277 & 0.2694 & 0.3181 & 0.4772 & 0.3383 & 0.5133 & 0.8367 & \underline{0.8717} & \textbf{0.9745} \\
& NDCG@5 & 0.3177 & 0.2755 & 0.5395 & 0.2201 & 0.1794 & 0.2213 & 0.3230 & 0.2266 & 0.3461 & \underline{0.7779} & 0.7479 & \textbf{0.9359} \\
\midrule
\multirow{2}{*}{\textbf{ML-1M}} & HR@5 & 0.4209 & 0.4121 & 0.6831 & 0.3353 & 0.2704 & 0.2921 & 0.4866 & 0.3060 & 0.4957 & 0.8086 & \underline{0.8411} & \textbf{0.9035} \\
& NDCG@5 & 0.2779 & 0.2750 & 0.5053 & 0.2243 & 0.1795 & 0.1940 & 0.3296 & 0.2002 & 0.3382 & 0.7489 & \underline{0.7695} & \textbf{0.8238} \\
\midrule
\multirow{2}{*}{\textbf{LastFM-2K}} & HR@5 & 0.7628 & 0.6879 & 0.7833 & 0.4846 & 0.3987 & 0.3932 & 0.7226 & 0.4240 & \textbf{0.7415} & 0.5950 & 0.6690 & \underline{0.7360} \\
& NDCG@5 & 0.5896 & 0.5687 & 0.6687 & 0.3613 & 0.2985 & 0.2824 & \underline{0.6628} & 0.3230 & 0.6508 & 0.5842 & 0.6468 & \textbf{0.6967} \\
\bottomrule
\end{tabular}
}
\caption{Performance comparison on four datasets, reported by HR@5 and NDCG@5. CenRec and FedRec represent centralized and federated recommendation methods, respectively. The best FedRec results are bold and the second ones are underlined.}
\label{tab:exp_main_}
\end{table*}

\subsection{Compared Baselines} 
We comprehensively compare FedEM with SOTA centralized methods as follows:\\
\textbf{Matrix Factorization (MF)}~\cite{koren2009matrix}: A foundational recommendation approach that decomposes the user-item interaction matrix into latent user and item embeddings, effectively capturing their respective features within a shared vector space.\\
\textbf{Neural Collaborative Filtering (NCF)}~\cite{he2017neural}: Extending traditional embedding-based methods, NCF employs a multi-layer perceptron (MLP) to model complex, non-linear interactions between users and items based on their latent representations.\\
\textbf{LightGCN}~\cite{he2020lightgcn}: A graph-based collaborative filtering method that simplifies traditional GCNs by removing non-linear transformations and feature combinations, enabling efficient learning of embeddings through linear propagation over the user-item interaction graph.

\noindent Besides, we compare FedEM with some SOTA federated methods, such as:\\
\textbf{FedMF}~\cite{chai2020secure}: A federated version of MF, where user embeddings are updated locally, and encrypted item gradients are shared with the server for aggregation. In our experiments, we use its unencrypted variant to focus on performance evaluation.\\
\textbf{FedNCF}~\cite{perifanis2022federated}: It extends NCF to the federated setting by updating user embeddings locally, while both item embeddings and the MLP components are uploaded for global aggregation on the server.\\
\textbf{FedFast}~\cite{muhammad2020fedfast}: It groups users into clusters according to their profile similarity, and applies cluster-specific sampling and aggregation strategies to enhance training efficiency.\\
\textbf{PFedRec}~\cite{zhang2023dual}: It utilizes a dual personalization mechanism that captures user preferences with a personalized score function and fine-grained personalization on item embeddings.\\
\textbf{CoLR}~\cite{nguyen2024towards}: It decomposes local updates into a low-rank matrix, where only a subset of parameters is trained and transmitted, while the remaining part is fixed and shared. This approach significantly reduces communication overhead while preserving model performance.\\
\textbf{GPFedRec}~\cite{zhang2024gpfedrec}: It proposes a graph-based aggregation strategy that captures inter-client relationships to guide global learning of user-specific item embeddings, thereby enhancing local personalization.\\
\textbf{FedRAP}~\cite{li2023federated}: It considers both the global view and user-specific view of items by applying an additive model to item embedding. This method requires updating the local model and the additive one for each client.\\
\textbf{FedCIA}~\cite{han2025fedcia}: It proposes a model-free aggregation method that uploads and aggregates client-side item similarity matrices, achieving better information retention and stronger personalization for each client.

\subsection{Implementation Details} 
In the experiments, we randomly sample $N=4$ negative instances for each positive sample, following prior works~\cite{he2017neural,zhang2023dual}. To ensure fair comparison, we set the embedding dimension $d=16$ and batch size $B=256$ for all methods, while other hyperparameter settings for baselines follow their default/optimal implementations. For centralized methods, the total number of training epochs is set to 100 to ensure convergence. For federated methods, we set the number of global rounds $R=100$ and sample $100\%$ of clients in each round to perform global aggregation for generality. Notably, for FedFast, we group clients into 5 clusters for stable training. For CoLR, the rank of decomposed updates is set to 4.

For our method, we adopt the same basic configurations as the backbone FedMF, \textit{i.e.}, a learning rate $\eta = 0.1$ for user and item embeddings, and $E=10$ local training epochs with the Adam optimizer. As for the additional hyperparameter $\alpha$, we perform a grid search within the range $[0.5, 1.5]$ with a step size of 0.1. By default, the adapter is implemented as a 3-hidden-layer MLP with the structure $[32, 16, 8, 1]$. We also explore MLPs with different depths, including $[32, 1]$, $[32, 16, 1]$, $[32, 32, 16, 8, 1]$, and $[32, 64, 32, 16, 8, 1]$. The learning rate $\beta$ for the adapter is set to be consistent with the embedding learning rate $\eta$, \textit{i.e.}, $\beta=0.1$. All experiments are conducted on a server equipped with four NVIDIA RTX A5000 GPUs, each with 24GB of memory.

\begin{figure*}[ht]
\centering
\includegraphics[width=0.99\textwidth]{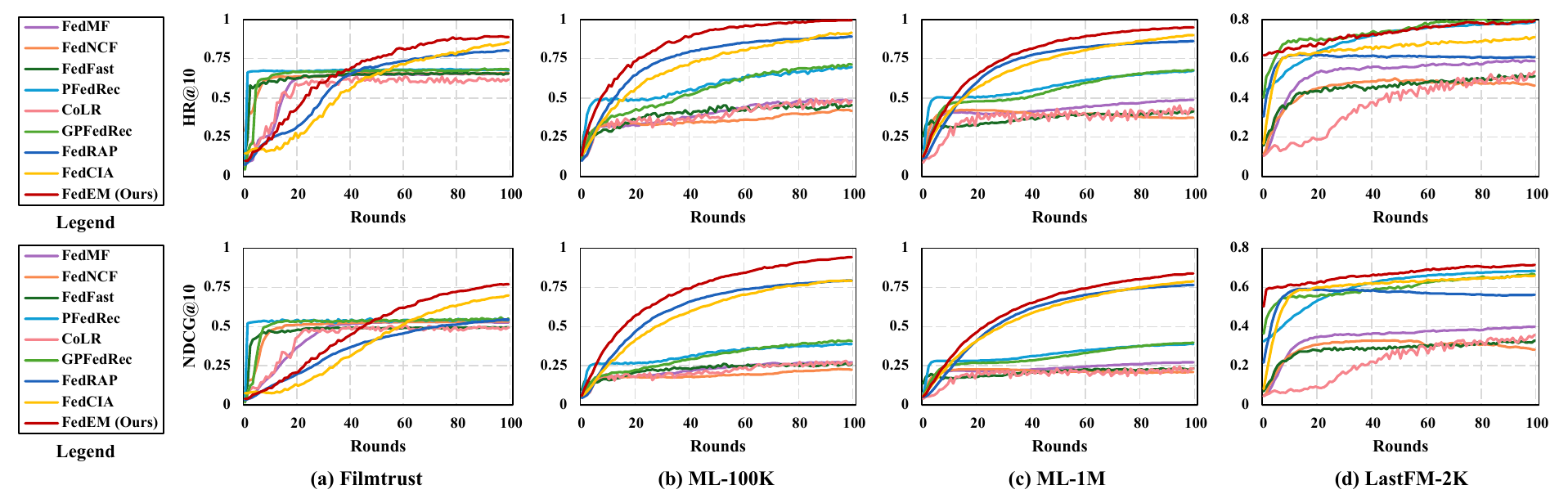}
\caption{Convergence comparison evaluated by HR@10 and NDCG@10. The horizontal axis is the federated rounds.}
\label{pic:convergence}
\end{figure*}

\section{E~~~~Extensive Experiment Results}
\subsection{More Baseline Comparison}
\textbf{Performance.} Furthermore, we present additional performance comparisons on HR@5 and NDCG@5, as illustrated in Figure~\ref{tab:exp_main_}. Our method outperforms both SOTA centralized and federated methods, which is consistent with the evaluation results on HR@10 and NDCG@10. In particular, on relatively dense datasets such as ML-100K and ML-1M, our method significantly outperforms the strongest baselines, with HR@5 exceeding 0.9. For extremely sparse datasets like LastFM-2K, existing federated methods generally underperform compared to centralized approaches such as MF and LightGCN, due to insufficient local interaction data on clients that leads to underfitting. Nevertheless, our method still slightly surpasses centralized models on the NDCG@5 and NDCG@10 metrics, demonstrating its effectiveness even under severe data sparsity. The superior performance of our method stems from identifying and addressing a fundamental issue in FR, rather than relying on heuristic designs. Moreover, the adopted model merging strategy is both theoretically grounded and empirically validated.

\noindent \textbf{Convergence.} We further compare the convergence of different methods, with results on four datasets shown in Figure~\ref{pic:convergence}. Compared to other methods: (1) FedEM demonstrates faster convergence in the early stages, especially on ML-100K, ML-1M, and LastFM-2K, nearly outperforming all baselines. (2) FedMF achieves more stable and superior performance in the later stages. Our method exhibits even smoother final convergence with minimal fluctuations, particularly noticeable on ML-100K and LastFM-2K. Moreover, its performance surpasses that of all other methods across all datasets. These advantages stem from the elastic merging scheme, which preserves local information from the previous round, effectively preventing the loss of both optimization and personalization information, thereby enhancing convergence stability and overall performance.

\subsection{More Ablation Study}

\textbf{Visualization of the vectors $\mathbf{w}$}. On the server side, we generate a weight vector $\mathbf{w}_u \in \mathbb{R}^n$ for each client $u$ to guide tailored aggregation. By stacking all the weight vectors, we obtain the aggregation matrix $\mathbf{W} = \{ \mathbf{w}_u \mid u \in \mathcal{U} \} \in \mathbb{R}^{n \times n}$. Here, we randomly select 10 clients in the same round on ML-100K and visualize the aggregation matrix $\mathbf{W}$. As shown in Figure~\ref{pic:vis_matrix}, different clients are assigned customized aggregation weights. The traditional fixed-weight aggregation, such as FedMF, assigns weights based on the relative amount of data, ensuring that clients with more training data have a larger influence in the aggregation. While this can incorporate more collaborative information, it may degrade the performance of the aggregated model when those dominant clients exhibit preferences dissimilar to others, as analyzed in Lemma~\ref{lemma:aggregation_bottleneck_}. To this end, we incorporate similarity-based weights to aggregate clients with similar preferences, thereby mitigating the negative impact of client heterogeneity and promoting more effective collaboration. This design further alleviates the performance bottleneck in global aggregation.

\begin{figure}[htbp]
\centering
\includegraphics[width=0.99\columnwidth]{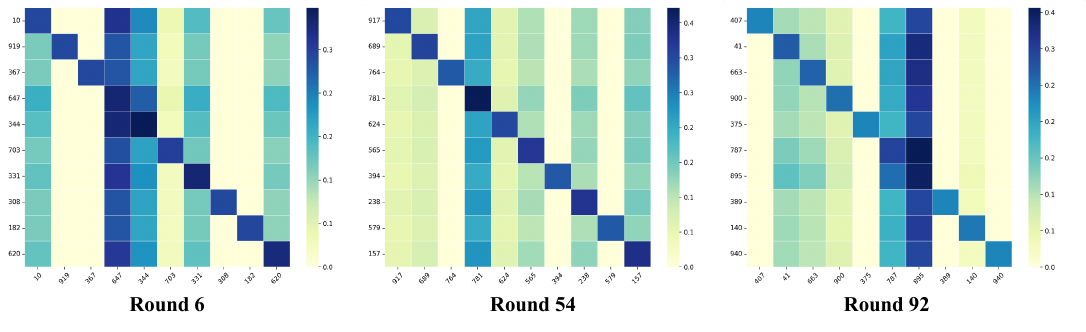} 
\caption{Visualization of matrices $\mathbf{W}$ for global aggregation of several randomly selected clients in different rounds.}
\label{pic:vis_matrix}
\end{figure}

\noindent \textbf{Visualization of the vectors $\bm{\rho}$}. We randomly select several clients on ML-100K in the same round and visualize their weight vectors $\bm{\rho}$ for elastic merging. For brevity, only the weights associated with the first 30 items are displayed, as shown in Figure~\ref{pic:vis_vector}. From the visualization perspective, FedEM takes into account the heterogeneity among clients, elastically yielding tailored $\bm{\rho}$ to coordinate the global and local models for each client. Additionally, it finely assigns different weights to each item within the same client, optimally exploiting both collaborative and personalized information. The proposed elastic merging scheme leverages the off-the-shelf local model to effectively compensate for the loss of local information introduced by global aggregation, thus breaking the performance bottleneck in FR.

\begin{figure}[htbp]
\centering
\includegraphics[width=0.99\columnwidth]{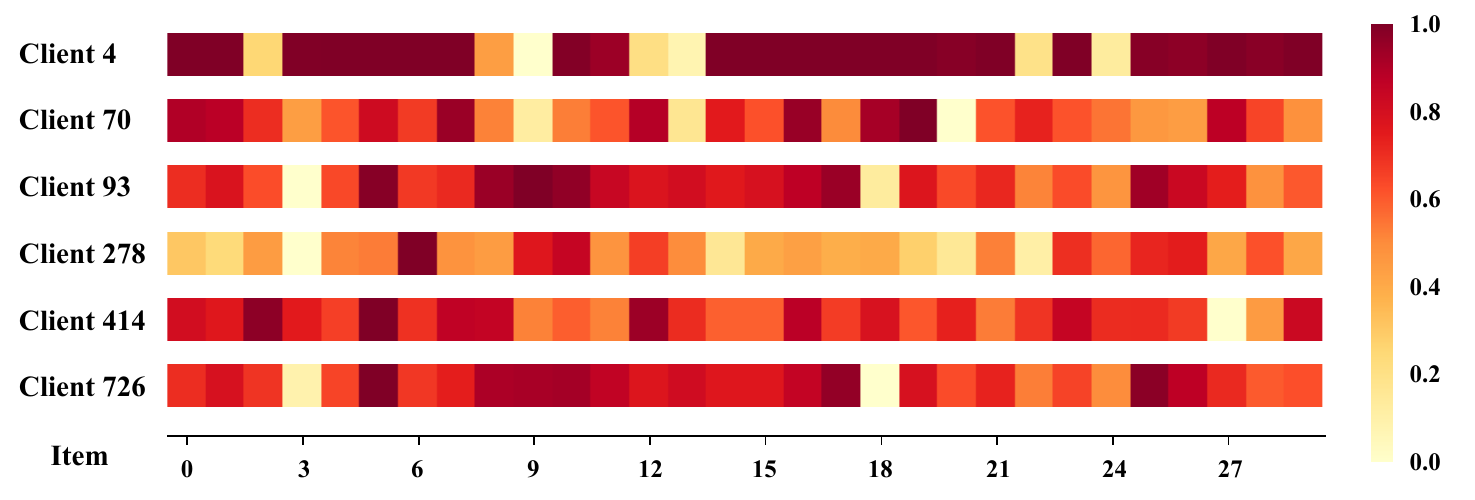} 
\caption{Visualization of vectors $\bm{\rho}$ for elastic merging of several randomly selected clients in the same round.}
\label{pic:vis_vector}
\end{figure}

\subsection{More Hyperparameter Analysis}
\textbf{Similarity Coefficient.} During global aggregation, we extend the fixed-weight strategy by additionally incorporating model similarity to generate client-specific aggregation weights, which are modulated by a similarity coefficient $\alpha$. As shown in Figure~\ref{pic:param_alpha}, we find that setting $\alpha$ around 1, \textit{i.e.}, balancing relative data size and model similarity, can effectively improve overall model performance. This is because aggregating models from similar clients mitigates the inherent heterogeneity, thereby reducing the potential negative effects of aggregation. Specifically, the best results are obtained when $\alpha=0.9$ on FilmTrust, ML-1M, and LastFM-2K, and when $\alpha=1.1$ on ML-100K.

\begin{figure}[ht]
\centering
\includegraphics[width=0.99\columnwidth]{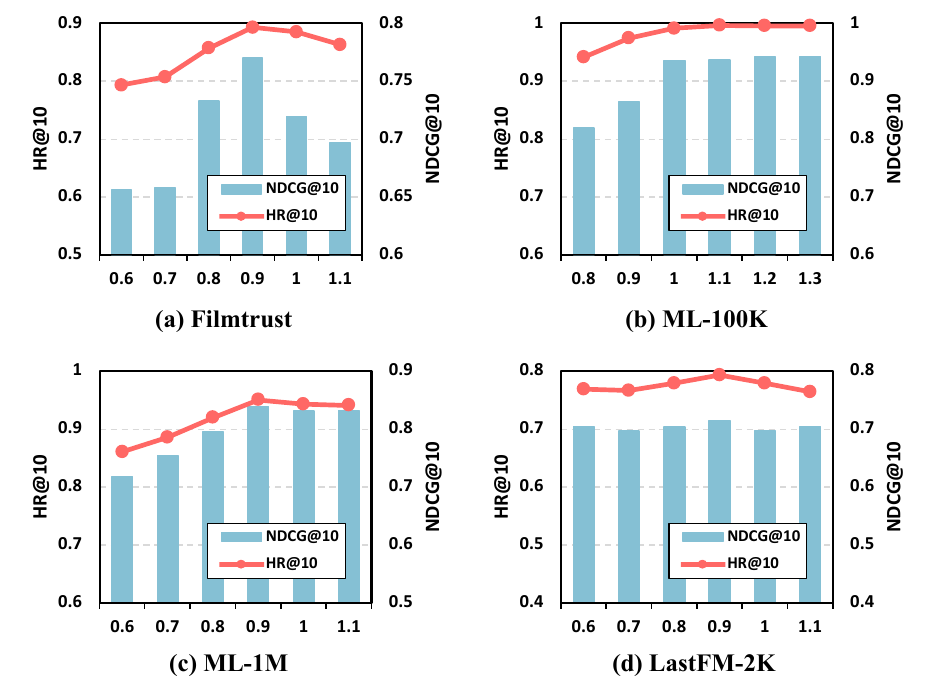}
\caption{Performance under different similarity coefficients $\alpha$ for global aggregation.}
\label{pic:param_alpha}
\end{figure}

\subsection{More Privacy Analysis}
Our method can be integrated with local differential privacy (LDP) to further enhance privacy protection. We evaluate the model performance under different noise strengths $\delta$, as shown in Table~\ref{tab:exp_ldp_}. As expected, the performance degrades as $\delta$ increases, indicating stronger privacy guarantees. Notably, FedEM remains robust when $\delta \in [0.1, 0.3]$, and even under stronger privacy constraints, \textit{e.g.}, $\delta = 0.5$, it still demonstrates competitive performance. In practice, to balance privacy and utility, we set the noise strength to $\delta = 0.3$.

\begin{table}[ht]
\centering
\resizebox{1.0\columnwidth}{!}{
\Large
\begin{tabular}{lcccccccc}
\toprule
\multirow{2}{*}{\textbf{$\delta$}}& \multicolumn{2}{c}{\textbf{Filmtrust}} & \multicolumn{2}{c}{\textbf{ML-100K}} & \multicolumn{2}{c}{\textbf{ML-1M}} & \multicolumn{2}{c}{\textbf{LastFM-2K}} \\
\cmidrule(lr){2-3}\cmidrule(lr){4-5}\cmidrule(lr){6-7}\cmidrule(lr){8-9}
& \large HR@10 & \large NDCG@10 & \large HR@10 & \large NDCG@10 & \large HR@10 & \large NDCG@10 & \large HR@10 & \large NDCG@10 \\
\midrule
0 & 0.8932 & 0.7701 & 0.9958 & 0.9427 & 0.9507 & 0.8392 & 0.7935 & 0.7151 \\
0.1 & 0.8892 & 0.7617 & 0.9936 & 0.9405 & 0.9490 & 0.8337 & 0.7817 & 0.7047 \\
0.2 & 0.8832 & 0.7459 & 0.9947 & 0.9319 & 0.9434 & 0.8249 & 0.7770 & 0.6999 \\
0.3 & 0.8802 & 0.7353 & 0.9936 & 0.9399 & 0.9392 & 0.8193 & 0.7707 & 0.6930 \\
0.4 & 0.8782 & 0.7194 & 0.9947 & 0.9342 & 0.9328 & 0.8122 & 0.7644 & 0.6857 \\
0.5 & 0.8643 & 0.7025 & 0.9905 & 0.9143 & 0.9308 & 0.8075 & 0.7533 & 0.6778 \\
\bottomrule
\end{tabular}
}
\caption{Performance of applying LDP to our method with different noise strength $\delta$.}
\label{tab:exp_ldp_}
\end{table}

\end{document}